\let\proof\@undefined
\let\endproof\@undefined
\newtheorem{theorem}{Theorem}[section]
\newtheorem{corollary}[theorem]{Corollary}
\newtheorem{lemma}[theorem]{Lemma}
\theoremstyle{definition}
\newtheorem{definition}[theorem]{Definition}
\theoremstyle{remark}
\newtheorem{remark}[theorem]{Remark}
\newcommand{\etal}{{\emph{et al.}}}
\title{Distributed Dominating Sets on Grids}
\author{Elaheh Fata \qquad Stephen L. Smith \qquad Shreyas Sundaram
\thanks{This research is partially
    supported by the Natural Sciences and Engineering Research Council
    of Canada (NSERC). } 
\thanks{The authors are with the Department of Electrical and
  Computer Engineering, University of Waterloo, Waterloo ON, N2L 3G1
  Canada. (email: efata@uwaterloo.ca;
  stephen.smith@uwaterloo.ca;  ssundara@uwaterloo.ca) }
}
\begin{document}

\maketitle
\thispagestyle{empty}
\pagestyle{empty}

\begin{abstract}
This paper presents a distributed algorithm for finding near optimal dominating sets on grids. The basis for this algorithm is an existing centralized algorithm that constructs dominating sets on grids. The size of the dominating set provided by this centralized algorithm is upper-bounded by $\left\lceil\frac{(m+2)(n+2)}{5}\right\rceil$ for $m\times n$ grids and its difference from the optimal domination number of the grid is upper-bounded by five. 
Both the centralized and distributed algorithms are generalized for the $k$-distance dominating set problem, where all grid vertices are within distance $k$ of the vertices in the dominating set.
\end{abstract}

\section{Introduction}
\label{sec:intro}

Significant attention has been devoted in recent years to the study of large-scale sensor and robotic networks due to their promise in fields such as environmental monitoring~\cite{NEL-DP-FL-RS-DMF-RD:07}, inventory warehousing~\cite{Guizzo08}, and reconnaissance~\cite{RWB-TWM-MAG-EPA:02}.  One of the key objectives in such networks is to ensure {\it coverage} of a given area, where every point in the space is within the sensing radius of one or more of the agents (i.e., sensors or robots).  Various algorithms have been proposed to achieve coverage based on differing assumptions on the mobility and sensing capabilities of the agents \cite{DistCtrlRobotNetw}.

In certain scenarios, the environment may impose restrictions on the feasible locations and motion of the agents \cite{SML:06}.  In such cases, it is natural to model the environment as a {\it graph}, where each node represents a feasible location for an agent, and edges between nodes indicate available paths for the agents to follow.  The coverage capabilities of any given agent are then related to the shortest-path distance metric on the graph: an agent located on a node can cover all nodes within a certain distance of that node.  The goal of selecting certain nodes in a graph so that all other nodes are within a specified distance of the selected nodes is classical in graph theory, and is known as the {\it dominating set problem} \cite{GJ'79}.  Versions of this problem appear in settings such as multi-agent security and pursuit \cite{Abbas12}, routing in communication networks \cite{Wu01}, and sensor placement in power networks \cite{Dorfling06}.

Finding the {\it domination number} (i.e., the size of a smallest dominating set) of arbitrary graphs is NP-hard \cite{GJ'79}. In fact, Raz and Safra showed that achieving an approximation ratio better than $c\log n$ for the dominating set problem in general graphs is NP-hard, where $c>0$ is some constant and $n$ is the number of vertices of the graph~\cite{RazSaf'97}. 
However, there are several algorithms known for the dominating set problem for which the ratio between the size of the resulting dominating set and graph domination number can closely reach the $c\log n$ bound. The simplest of these algorithms is a \emph{greedy algorithm} that at each step adds one vertex to the dominating set. The vertices that are already in the dominating set are marked as `black', the vertices that share edges with black vertices are marked as `gray' and other vertices are `white'. At each step, a white vertex that shares the maximum number of edges with other white vertices are added to the dominating set and the colour labels of all vertices are updated according to the aforementioned rules. Another widely used approximation algorithm for the problem uses a \emph{linear programming relaxation}. Both greedy and linear programming approaches for the dominating set problem are known to have $(\ln n+1)$-approximation ratios~\cite{Johnson'74,lovasz'75}, which are in $O(\log n)$.

Even though in general graphs one cannot obtain an approximation ratio in $o(\log n)$, in special types of graphs better approximation ratios are obtainable. One of the most important classes of graphs are \emph{planar graphs}. A \emph{planar graph} is a graph that can be drawn in a plane so that none of its edges intersect except at their ends~\cite{BM'08}. The dominating set problem is still NP-hard for planar graphs; however, the domination number of this type of graphs can be approximated within a factor of $(1+\epsilon)$ for an arbitrarily small $\epsilon>0$~\cite{Baker'94}. 

Grid graphs are a special class of graphs that have attracted attention due to their ability to model and discretize rectangular environments~\cite{LBTD'03,LJDKM'00}.
Grids can be  used in simplifying the underlying environment and limiting energy consumption by representing a certain area of the environment with only one node in the grid~\cite{BJDMTACX'05}. Moreover, grid graphs, due to their special structure that do not leave any area of environment unrepresented while transferring the problem environment into a tractable domain, can successfully provide efficient area coverage and hence are used very commonly in the network coverage and delectability literature~\cite{LBTD'03,CW'04}. All these application motivated us in studying the dominating set problem when the underlying graph is a grid.

As discussed above, it is NP-hard to find the domination number of general or even planar graphs. It can be easily observed that grid graphs lie in the class of planar graphs and hence their domination number can be obtained within a small ratio. However, due to the special structure of grids, their domination number can in fact be determined optimally, although the path to obtaining the exact domination number of grids was not straightforward. For $m \times n$ grid graphs, the size of the optimal dominating set was unknown until recently, although an upper bound of $\left\lfloor \frac{(m+2)(n+2)}{5}\right\rfloor-4$ was shown in \cite{TYC'92} for $8\le m\le n$ using a constructive method.  Various attempts have been made in recent years to find a tight lower bound on the size of the optimal dominating set.  In \cite{ACIP'11}, the authors used brute-force computational techniques to find optimal dominating sets in grids of size up to $n = m = 29$.  The paper \cite{GPRT'11} showed finally that the lower bound on the domination number is equal to the upper bound for~$16\le m\le n$, thus characterizing the domination number in grids.  

In this paper, we make two contributions to the study of dominating sets on grids, and their application to multi-agent coverage. First, we provide a distributed algorithm that locates a set of agents on the vertices of an $m\times n$ grid such that they construct a dominating set for the grid and the required number of agents is within a constant error from the optimal. 
The agents require only limited memory, sensing and communication abilities, and thus the solution is applicable to multi-robot coverage applications where the environment can be discretized as a grid. Our distributed algorithm is based on a simple constructive method to obtain {\it near-optimal} dominating sets (i.e., that require no more than $5$ vertices over the optimal number) in grids by Chang $\etal$~\cite{TYC'92}. This  approach is based on a systematic tiling pattern that we call a {\it diagonalization}. Second, we generalize Chang's construction to the $k$-distance dominating set problem, where a given vertex can cover all other vertices within a distance $k$ from it. We show that our distributed algorithm can also be generalized to work in the $k$-distance domination scenario.

In Section \ref{sec:background}, we introduce the essential models and notation for formulating the dominating set problem on grids. In Section \ref{sec:center-domin} we discuss the constructive centralized grid domination algorithm. The materials in Section \ref{sec:center-domin} are used in Section \ref{sec:dist-domin} to design a distributed algorithm for the dominating set problem. Section \ref{sec:k-distance} generalizes the results in Sections \ref{sec:center-domin} and \ref{sec:dist-domin} to the $k$-distance dominating set problem. Finally, Section \ref{sec:conclusion} concludes the paper and discusses the corresponding open problems.

\section{Background}
\label{sec:background}
A graph $G=(V,E)$ is defined as a set of vertices $V$ connected by a set of edges $E\subseteq V\times V$. We assume the graph is \emph{undirected}, i.e., $(v,u)\in E\Leftrightarrow (u,v)\in E,\forall v,u\in V$. A vertex $u\in V$ is defined as a \emph{neighbour} of vertex $v\in V$, if $(u,v)\in E$. The set of all neighbours of vertex $v$ is denoted by $N(v)$. For a set of vertices $U\subseteq V$, we define $N(U)$ as $\bigcup_{u\in U}{N(u)}$. For a set of vertices $U\subseteq V$, we say the vertices in $N(U)$ are \emph{dominated} by the vertices in $U$. For graph $G$, a set of vertices $S\subseteq V$ is a \emph{dominating set} if each vertex $v\in V$ is either in $S$ or is dominated by $S$.

A dominating set with minimum cardinality is called an \emph{optimal dominating set} of a graph $G$; its cardinality is called the \emph{domination number} of $G$ and is denoted by $\gamma(G)$. Note that although the domination number of a graph, $\gamma(G)$, is unique, there may be different optimal dominating sets \cite{CLRS'01}.

Here, we study the dominating set problem on a special class of graphs called \emph{grid graphs}. An $m\times n$ grid graph $G=(V,E)$ is defined as a graph with vertex set $V=\{ v_{i,j}| 1\le i\le m, 1\le j\le n\}$ and edge set $E=\{(v_{i,j},v_{i,j'})|~|j-j'|=1\}\bigcup \{(v_{i,j},v_{i',j})|~|i-i'|=1\}$ \cite{BM'08}. For ease of exposition, we will fix an orientation and labelling of the vertices, so that vertex $v_{1,1}$ is the lower-left vertex and vertex $v_{m,n}$ is the upper-right vertex of the grid.
We denote the domination number of an $m\times n$ grid $G$ by $\gamma_{m,n}=\gamma(G)$.

\begin{theorem}[Gon\c{c}alves $etal$,~\cite{GPRT'11}]
\label{theo:cited-4}
For an $m\times n$ grid with $16\le m\le n$, $\gamma_{m,n}=\left\lfloor \frac{(m+2)(n+2)}{5}\right\rfloor-4$.
\end{theorem}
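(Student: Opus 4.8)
Since Theorem~\ref{theo:cited-4} is quoted from Gon\c{c}alves \etal, I would reconstruct its proof as a matching pair of bounds, establishing $\gamma_{m,n}\le \left\lfloor\frac{(m+2)(n+2)}{5}\right\rfloor-4$ by an explicit construction and the matching lower bound by a discharging argument. The two halves are of very different difficulty: the upper bound is essentially Chang's diagonalization, while the lower bound is the genuine content.

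For the upper bound, I would exploit the fact that the ``plus''-shaped closed neighbourhoods (a vertex together with its four grid neighbours) tile the infinite lattice $\mathbb{Z}^2$: taking all vertices $v_{i,j}$ with $i+2j\equiv 0 \pmod 5$ yields a perfect dominating set of the infinite grid, in which every vertex is dominated exactly once (each of $(i,j)$ and its four neighbours hits a distinct residue modulo $5$). Restricting this periodic pattern and its four translates to the finite $m\times n$ grid, I would count the selected vertices. Because a dominator sitting just outside the grid can still cover a boundary vertex, the natural count is taken over an $(m+2)\times(n+2)$ region padded by a one-vertex frame, which has density $1/5$ and hence about $(m+2)(n+2)/5$ dominators; choosing the best translate and discarding the frame dominators that cover nothing inside the grid is what produces the floor and the $-4$. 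The careful part is verifying that every boundary and corner vertex remains dominated after the frame dominators are removed, forcing a bounded amount of boundary repair that never exceeds the claimed budget.

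The lower bound is the hard direction and the heart of the result. The elementary counting bound only gives $\gamma_{m,n}\ge mn/5$, since each dominator covers at most five vertices. Writing $(m+2)(n+2)/5 = mn/5 + (2m+2n+4)/5$, the target exceeds this trivial bound by a term proportional to the perimeter $2(m+n)$, reflecting the unavoidable inefficiency of domination near the boundary, where pluses centred on or near the border lose part of their coverage off the grid. I would capture this with a discharging scheme: assign each grid vertex an initial unit charge, let each vertex send its charge to a dominator in its closed neighbourhood, and then argue that each dominator receives at most five units while dominators near the boundary are systematically unable to reach five because the lattice positions that would have fed them lie outside the grid. Summing charges two ways converts the accumulated boundary deficit into the required perimeter-order gap above $mn/5$, matching the construction.

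I expect the discharging case analysis to be the principal obstacle. One must classify the local dominator configurations that can appear along each of the four sides and at each corner, and verify that in every admissible configuration the charge lost across the boundary is at least what the bound demands, uniformly over all $16\le m\le n$. The hypothesis $m\ge 16$ is precisely what reduces the infinitely many grid sizes to a finite family of boundary patterns that can be checked exhaustively; a secondary subtlety is ensuring that the four corner analyses combine without double-counting the charge escaping near each corner, so that the separate side and corner deficits can simply be added.
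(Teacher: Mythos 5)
The paper never proves Theorem~\ref{theo:cited-4}: it is imported from Gon\c{c}alves \etal~\cite{GPRT'11}, and the machinery the paper itself develops (diagonalization plus projection) only yields the weaker upper bound $\left\lceil\frac{(m+2)(n+2)}{5}\right\rceil$ of Theorem~\ref{theo:upper-bound}, not $\left\lfloor\frac{(m+2)(n+2)}{5}\right\rfloor-4$. So your reconstruction has to be judged against the cited work, and while its skeleton is right (Chang-style construction~\cite{TYC'92} for the upper bound, a charging or ``loss'' argument for the lower bound), the step you lean on for the lower bound is not sound as stated. You attribute the perimeter-order deficit to dominators near the boundary ``being unable to reach five units'' because some of their neighbours lie off the grid. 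But nothing forces a dominating set to place vertices on the boundary: all dominators may sit at distance at least one from the border, in which case every dominator's closed neighbourhood lies wholly inside the grid and your per-dominator capacity argument degenerates to the trivial $5|S|\ge mn$. In that regime the boundary inefficiency shows up as forced \emph{multiple coverage}, not reduced capacity: a bottom-row vertex $v_{1,j}$ with no dominator in row $1$ can only be dominated by $v_{2,j}$, so long stretches of row $1$ force many mutually overlapping dominators in row $2$. The genuine argument (Guichard's loss bound, refined in~\cite{GPRT'11}) books both effects simultaneously, writing $5|S| = mn + (\text{coverage wasted off the grid}) + (\text{excess from multiply covered vertices})$, and must bound the \emph{sum} of the two terms below by roughly $2(m+n)-20$ uniformly over every way of trading one against the other; sharpening this trade-off at the four corners is precisely what improves Guichard's earlier constant to the tight $-4$, and it is the step your sketch leaves empty.

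Your reading of the hypothesis $16\le m$ is also off. It is not a device that makes a boundary case analysis finite; it is needed for the statement to be \emph{true}. For small $m$ the right-hand side overshoots: for example $\gamma_{1,n}=\lceil n/3\rceil$, whereas the formula gives roughly $3n/5$, and deviations from the formula persist for widths up to $m=15$. Moreover, the cited proof is not purely analytic: the loss argument applies only to sufficiently large grids, and the remaining band of widths between $16$ and the analytic threshold is settled by computer-assisted dynamic programming over fixed-width strips, an ingredient your proposal does not anticipate. Your upper-bound sketch is essentially correct in mechanism (diagonalize a padded $(m+2)\times(n+2)$ region and exploit the four corner dominators that project to nothing), but even there the floor and the exact $-4$ require a residue-by-residue analysis of $m,n \bmod 5$ to guarantee the chosen translate simultaneously achieves density $\left\lfloor\frac{(m+2)(n+2)}{5}\right\rfloor$ and places dominators at all four corners, which is the content of Chang's case analysis rather than a routine bookkeeping step.
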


Our distributed grid domination algorithm is based on a procedure developed by Chang $\etal$~\cite{TYC'92}. To obtain the tools needed in our distributed algorithm, we discuss an overview of Chang's algorithm in Section~\ref{sec:center-domin}. These tools are used in Sections~\ref{sec:dist-domin} and~\ref{sec:k-distance} in the distributed domination algorithm and in developing the $k$-distance dominating set results. Before discussing these tools, we introduce two useful definitions.

\begin{definition}(Grid Boundary)
\label{def:boundaries}
For an $m\times n$ grid $G=(V,E)$, we define the \emph{boundary} of $G$, denoted by $B(G)$, as the set of vertices with less than 4 neighbours.
\end{definition}

\begin{definition}(Sub-Grids and Super-Grids)
\label{def:sub-super-grids}
An $m\times n$ grid $G=(V,E)$ is called a \emph{sub-grid} of an $m'\times n'$ grid $G'=(V',E')$ if $G$ is induced by vertices $v'_{i,j}\in V'$, where $2\le i\le m'-1$ and $2\le j\le n'-1$. If $G$ is a sub-grid of $G'$, $G'$ is called the \emph{super-grid} of $G$ (see Figure \ref{fig:eg}(a)).
\end{definition}

\section{Overview of Centralized Grid Domination Algorithm}
\label{sec:center-domin}
In \cite{ACIP'11}, Alanko $\etal$ provided examples of optimal dominating sets for $n\times n$ grids with $1\le n\le 29$, obtained via a brute-force computational method. A visual inspection of these examples shows that as the size of the grid increases, the patterns of dominating vertices become more regular in the interior of grids, with irregularities at the boundaries. Figure \ref{fig:patterns} demonstrates some examples of patterns that arise in the dominated grids in \cite{ACIP'11}.

\begin{figure}[t]
\includegraphics[scale=0.25]{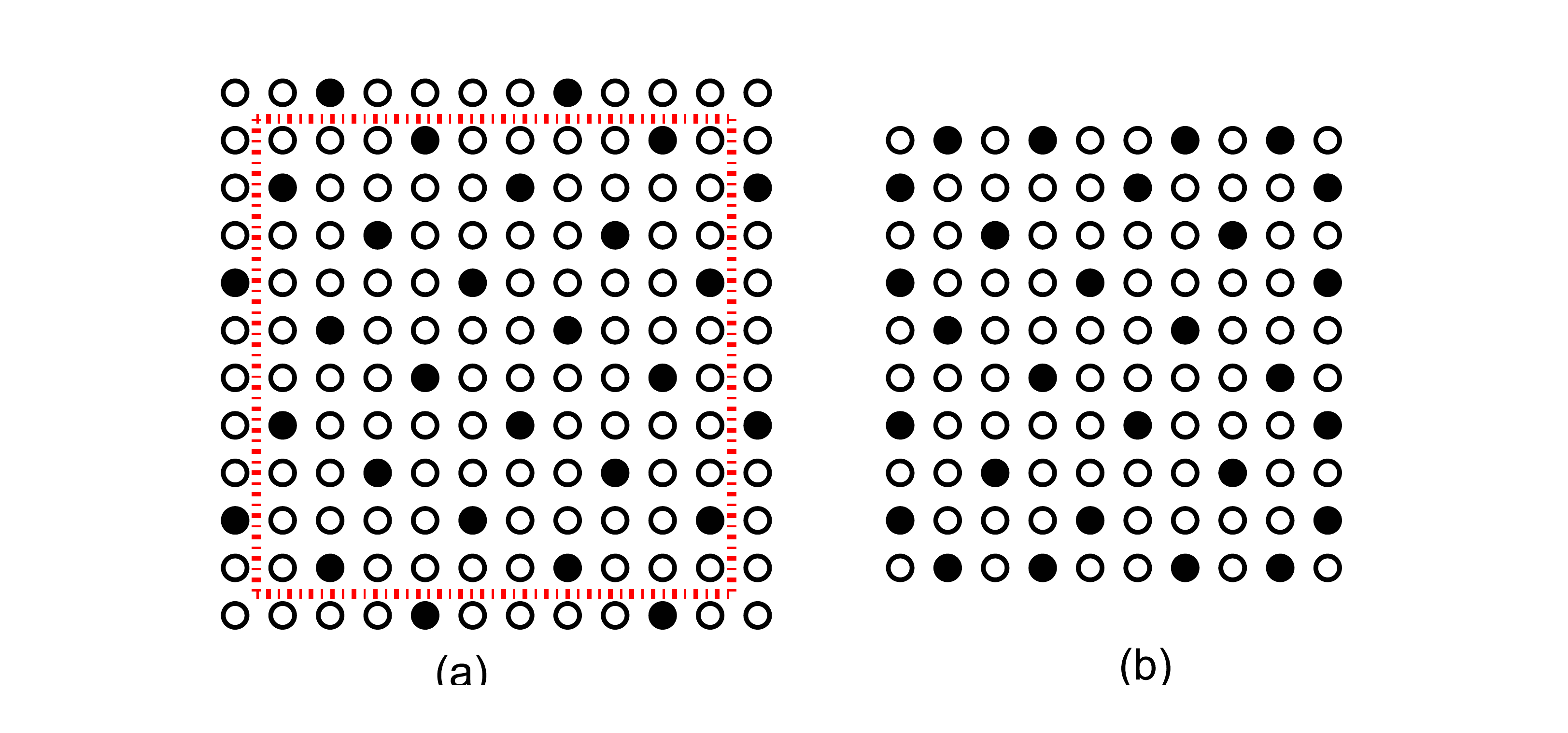}
\caption{In (a), a $12\times 12$ grid $G'$ is demonstrated and its $10\times 10$ sub-grid $G$ is highlighted by a red dashed square. $G'$ is diagonalized by a set $U'$ of 28 vertices. In (b), vertices in $U'\backslash V$ are projected onto their neighbours in $G$.}
\label{fig:eg}      
\end{figure}

Among the patterns used to dominate grids, the one illustrated in Figure \ref{fig:patterns}(b) is the most efficient, since there is no vertex that is dominated by more than one dominating vertex in this pattern. Hence, this pattern would be useful in obtaining dominating sets with near optimal size. We refer to the structure in Figure \ref{fig:patterns}(b) as a \emph{diagonal pattern}. Chang $\etal$ in~\cite{TYC'92} used these patterns to provide an upper-bound on the domination number of grids. As the proof on the upper-bound they obtained was constructive, we could derive a centralized algorithm for finding near-optimal dominating sets from their constructions. 
In this section, we provide an overview of Chang's construction and the derived algorithm from their results which we will use in the subsequent sections. First, we define the diagonal patterns formally as follows.  
Note that the $x$ and $y$ axes are as shown in Figure~\ref{fig:patterns}.

\begin{definition}(Diagonal Pattern)
\label{def:diag-patt}
A set of vertices $U\subset V$ constitutes a \emph{diagonal pattern} on grid $G=(V,E)$ if there exists a fixed $r\in \{0,1,2,3,4\}$ such that for any vertex $v_{x,y}\in U$ we have $y-2x\equiv r \pmod{5}$.
\end{definition}

\begin{definition}(Diagonalization)
\label{def:diagonalization}
A set of vertices $U\subset V$ \emph{diagonalizes} grid $G=(V,E)$ if it constitutes a diagonal pattern and there exists no vertex $v\in V\backslash U$ that can be added to $U$ so that $U$ remains a diagonal pattern.
\end{definition}

An example of a diagonalization is shown in Figure \ref{fig:eg}(a).\footnote{One can also define a diagonal pattern as a set of vertices whose $(x,y)$ coordinates satisfy $x-2y\equiv r \pmod{5}$, for some fixed $r$. This corresponds to swapping the $x$ and $y$ axes. For the proofs we only analyze the case mentioned in Definition \ref{def:diag-patt}; the other case can be treated similarly.} 
The algorithm derived from Chang's construction consists of the following two main steps:
\begin{enumerate}
  \item Diagonalization: At this step, a set of vertices $U$ that diagonalizes the grid is provided. 
  \item Projection: Using a process called \emph{projection}, the vertices that were not dominated by vertices in $U$ are characterized and new vertices are added to $U$ to dominate those vertices as well.
\end{enumerate}

We know discuss these two steps in more details. Chang $\etal$ showed that if a grid $G=(V,E)$ is diagonalized by a set of vertices $U\subset V$, then for any vertex $v\in (V\backslash U)$ that is not located on the grid's boundary there exists exactly one vertex in $U$ that shares an edge with $v$. In other words, every node that is not located on the grid's boundary, $B(G)$, is dominated by exactly one vertex in $U$. Moreover, they proved that if a set of vertices $U\subset V$ diagonalizes an $m\times n$ grid $G=(V,E)$, then $U$ contains at most $\left\lceil \frac{mn}{5}\right\rceil$ vertices. To construct a dominating set for $G$ it only remains to add some vertices to $U$ so that the resulting set dominates the vertices on the boundary as well. The vertices located on $B(G)$ with no neighbour in $U$ are called \emph{orphans} and are defined formally as follows.

\begin{definition}(Orphans)
\label{def:orphan}
Let $U\subset V$ be a set of vertices that diagonalizes grid $G=(V,E)$. A vertex $v\in V$ that has no neighbour in $U$ is called an \emph{orphan} (see Figure \ref{fig:eg}(a)).
\end{definition}

To dominate orphans, Chang $\etal$ used the super-grid of $G$, denoted by $G'=(V',E')$. Since the vertices on the boundary of $G$ lie inside grid $G'$, a set of vertices $U'\subset V'$ that diagonalizes $G'$ dominates all vertices of $G$. Moreover, it can be easily seen that the set of vertices $U=U'\cap V$ is a diagonalization for grid $G$.

Recall that diagonalization results in every vertex being dominated by at most one vertex in the diagonal pattern. Therefore, if a set of vertices $U'\subset V'$ diagonalizes $G'=(V',E')$, that is, the super-grid of $G=(V,E)$, then there are vertices in $B(G)$ that are dominated by vertices in $U'\backslash V$. Hence, the \emph{orphan} of a vertex $v\in U'\backslash V$ is a vertex $u\in B(G)$ such that $u\in N(v)$, and is denoted by $u=\mathrm{orphan}(v)$.

\begin{corollary}
\label{cor:orphan-n+m}
For an $m\times n$ grid $G$, the number of orphans is $O(n+m)$.
\end{corollary}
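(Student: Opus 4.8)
The plan is to observe that, by the diagonalization property, orphans can only occur on the boundary of the grid, and then to bound the number of boundary vertices directly.

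First I would recall the property of diagonalizations established above (due to Chang \etal): if $U \subset V$ diagonalizes $G$, then every vertex $v \in V \setminus U$ that does not lie on the boundary $B(G)$ has exactly one neighbour in $U$, and so is dominated by $U$. In particular, no such interior vertex can be an orphan. Since the vertices of $U$ themselves belong to the diagonal pattern and require no domination, the set of orphans is contained entirely in $B(G)$.

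Next I would count the boundary vertices. For an $m \times n$ grid, $B(G)$ is precisely the set of perimeter vertices: the top and bottom rows contribute $n$ vertices each, the left and right columns contribute $m$ vertices each, and the four corner vertices are each shared between a row and a column. Hence $|B(G)| = 2m + 2n - 4$.

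Combining these observations, the number of orphans is at most $|B(G)| = 2m + 2n - 4 = O(n+m)$, which proves the claim. I do not anticipate any genuine obstacle here: the entire argument reduces to the already-established fact that interior vertices are always dominated, so orphans are confined to the $O(n+m)$ vertices lying on the perimeter.
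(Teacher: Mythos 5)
Your proof is correct and takes essentially the same route as the paper, which states this corollary without a separate proof precisely because it follows from the fact quoted just above it: in a diagonalized grid every vertex off the boundary is dominated by exactly one vertex of $U$, so orphans are confined to $B(G)$, and $|B(G)| = 2m+2n-4 = O(n+m)$. There is no gap; your explicit boundary count simply spells out what the paper leaves implicit.
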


Since by diagonalizing $G'$ the orphans in $G$, i.e, vertices in $N(U'\backslash U)\cap V$, are dominated by the dominating vertices on the boundary of $G'$, a procedure called \emph{projection} is introduced that projects the dominating vertices in $B(G')$ inside sub-grid $G$. Hence, projection results in having all vertices in $G$ being dominated. This procedure is defined formally as follows.

\begin{definition}(Projection)
\label{def:porjection}
Consider a grid $G=(V,E)$ and its super-grid $G'=(V',E')$. For a set $U'\subseteq V'$, its \emph{projection} is defined as
the set $U''=\big(N(U'\backslash V)\cup U'\big)\cap V$. 
Similarly, we say a vertex $v\in U'\backslash V$ is \emph{projected} if it is mapped to its neighbour in $V$.
\end{definition}

\begin{figure}[t]
\includegraphics[scale=0.3]{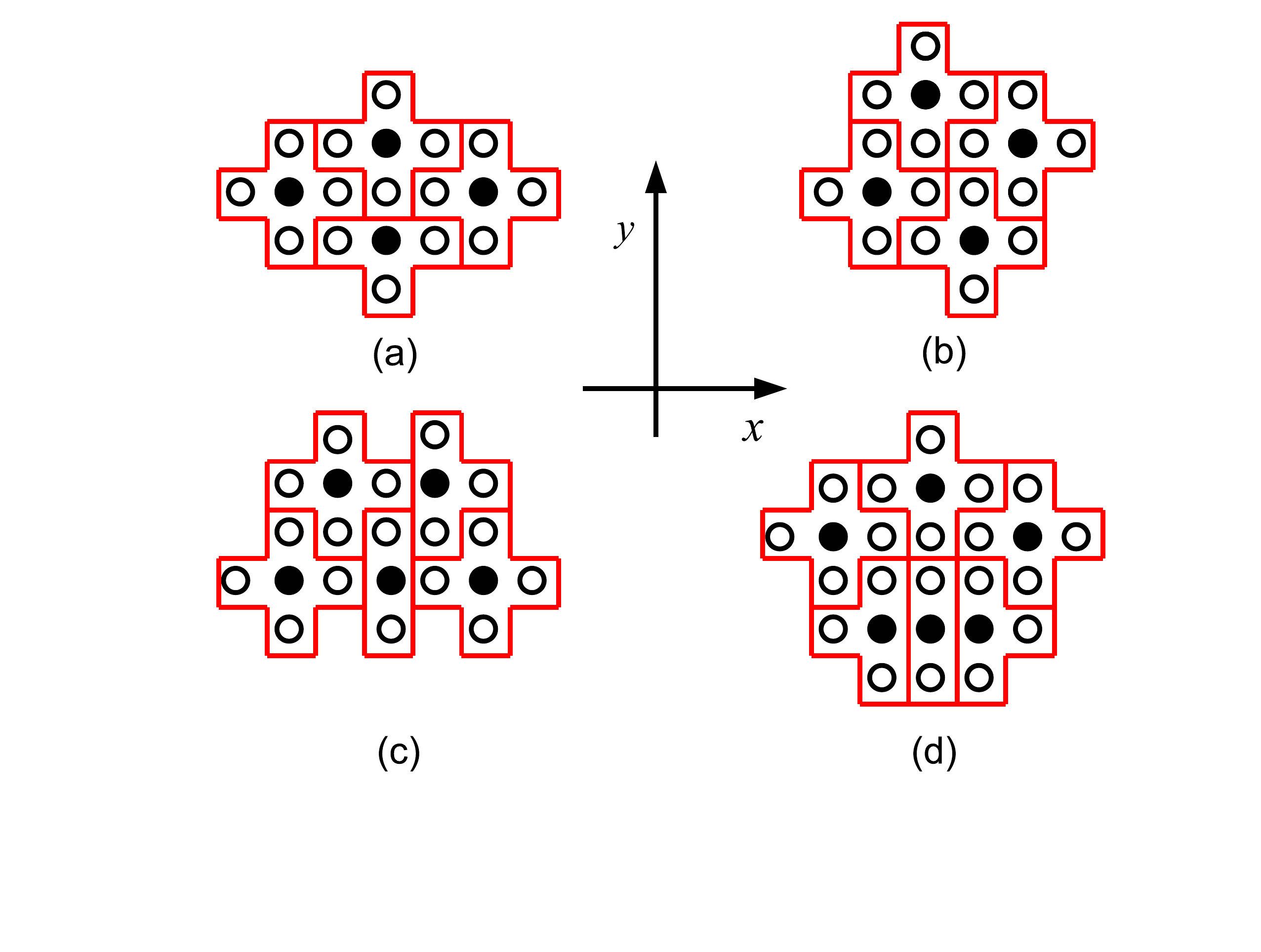}
\vspace{-1cm}
\caption{Examples of dominating vertex patterns
that appear in optimally dominated grids. The black vertices are the dominating vertices. The red line segments form regions so that in each region there exist one black vertex and at most four white vertices dominated by that black vertex.}
\label{fig:patterns}
\end{figure}

Figure \ref{fig:eg}(b) shows an example of a projection.
For grid $G=(V,E)$, its super-grid $G'=(V',E')$ and set $U'\subset V'$ that diagonalizes $G'$, by performing projection, the size of the obtained dominating set of $G$ is between $|U'|-4$ and $|U'|$. This is due to the fact that a vertex $v\in U'$ located at any corner of $G'$ has no neighbour in $V$ and hence, after projection it is not mapped into $V$. Since $G'$ has four corners, for $U''$, the result of projection of $U'$, we have $|U'|-4\le |U''|\le |U'|$. Hence $|U'|$, that is, the number of dominating vertices used in diagonalizing the super-grid of $G$, is an upper-bound on the number of dominating vertices used to fully dominate $G$ by diagonalization and projection. Since the size of super-grid of and $m\times n$ grid $G$ is $(m+2)\times (n+2)$, therefore, $|U|'\le \left\lceil \frac{(m+2)(n+2)}{5}\right\rceil$. Hence, $\left\lceil \frac{(m+2)(n+2)}{5}\right\rceil$ is an upper-bound on the number of dominating vertices used to dominate grid $G$ by Chang's algorithm. The following theorem reflects this upper-bound.

\begin{theorem}[Chang $\etal$,~\cite{TYC'92}]
\label{theo:upper-bound}
For any $m\times n$ grid $G=(V,E)$ with $m,n\in \mathbb{N}$, a dominating set $S\subset V$ can be constructed in polynomial-time, such that $|S|\le \left\lceil \frac{(m+2)(n+2)}{5}\right\rceil$. Moreover, for grids with $16\le m\le n$ we have $|S|-\gamma_{m,n}\le 5$.
\end{theorem}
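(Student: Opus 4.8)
The plan is to turn the constructive discussion preceding the statement into a formal two-step argument, and then to obtain the near-optimality guarantee by a short arithmetic comparison with Theorem~\ref{theo:cited-4}. The cardinality bound and the approximation claim are essentially independent: the former rests entirely on the diagonalize-then-project construction, while the latter is a one-line consequence once the former is in hand.

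For the cardinality bound, I would first form the super-grid $G'=(V',E')$ of $G$, which by Definition~\ref{def:sub-super-grids} is an $(m+2)\times(n+2)$ grid. Invoking the fact (established earlier) that any diagonalization of an $m'\times n'$ grid contains at most $\lceil m'n'/5\rceil$ vertices, I choose a set $U'$ that diagonalizes $G'$ and record $|U'|\le\lceil (m+2)(n+2)/5\rceil$. I then set $S=U''$, the projection of $U'$ as given in Definition~\ref{def:porjection}. Because each of the four corner vertices of $G'$ has no neighbour inside $V$, projection can only decrease cardinality, so $|S|\le|U'|\le\lceil (m+2)(n+2)/5\rceil$, which is the first claim. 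Polynomial-time constructibility is immediate, since both diagonalization (a single modular sweep over the $O(mn)$ vertices) and projection (a single pass mapping each boundary dominator to an interior neighbour) run in polynomial time.

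The step that actually requires care is verifying that $S$ dominates $G$. Here I would argue that since $V$ sits strictly in the interior of $V'$, the diagonalization $U'$ of the super-grid already dominates every vertex of $G$; the only vertices of $G$ whose unique dominator lies outside $V$ are the orphans, each dominated by some $v\in U'\setminus V$ on the boundary of $G'$. Projection replaces each such $v$ by its neighbour $\mathrm{orphan}(v)\in V$, and I must check that this replacement still dominates the orphan and does not strand any previously dominated vertex. This is the crux, and it is exactly where the regularity of the diagonal pattern (each interior vertex dominated by exactly one pattern vertex) is used to rule out conflicts.

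Finally, for the near-optimality claim on grids with $16\le m\le n$, I would substitute the exact value $\gamma_{m,n}=\lfloor (m+2)(n+2)/5\rfloor-4$ from Theorem~\ref{theo:cited-4} and bound
\[
|S|-\gamma_{m,n}\le\left\lceil\tfrac{(m+2)(n+2)}{5}\right\rceil-\left\lfloor\tfrac{(m+2)(n+2)}{5}\right\rfloor+4.
\]
Since $\lceil x\rceil-\lfloor x\rfloor\le 1$ for every real $x$, the right-hand side is at most $5$, giving the bound. I expect no difficulty in this last part; the entire weight of the proof lies in the domination-correctness of the projection step described above.
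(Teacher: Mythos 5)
Your proposal follows exactly the paper's route: diagonalize the $(m+2)\times(n+2)$ super-grid $G'$, project the dominators in $U'\setminus V$ into $G$, bound $|S|\le|U'|\le\left\lceil\frac{(m+2)(n+2)}{5}\right\rceil$ via the $\left\lceil\frac{m'n'}{5}\right\rceil$ bound on diagonalizations, and obtain the additive error of $5$ from Theorem~\ref{theo:cited-4} using $\lceil x\rceil-\lfloor x\rfloor\le 1$. The one step you flag as the crux is in fact immediate from Definition~\ref{def:porjection}: the projection $U''$ contains all of $U'\cap V$ together with every orphan itself (each orphan is a neighbour in $V$ of some $v\in U'\setminus V$, hence lies in $N(U'\setminus V)\cap V\subseteq U''$), so orphans dominate themselves and no previously dominated vertex loses its dominator.
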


The upper-bound on the difference between the cardinality of the provided dominating set $S$ from the domination number of an $m\times n$ grid $G$ with $16\le m\le n$, $\gamma_{m,n}$, is obtained by virtue of Theorem \ref{theo:cited-4}. An example of constructing dominating sets for grids using diagonalization and projection is shown in Figure \ref{fig:eg}.

In the following lemma we show that although in diagonal patterns no vertex is covered by more than one dominating vertex, using a simple greedy algorithm does not necessarily result in diagonalizing the grid or using at most $\left\lceil \frac{(m+2)(n+2)}{5}\right\rceil$ dominating vertices to dominate the grid. 

\begin{lemma}
\label{lem:greedy-grid}
The size of the dominating set obtained by a greedy algorithm on an $m\times n$ grid $G$ might be as large as $\left\lceil \frac{m}{3}\right\rceil \left\lceil \frac{n}{3}\right\rceil+2\left\lfloor \frac{m}{3}\right\rfloor \left\lfloor \frac{n}{3}\right\rfloor$.
\end{lemma}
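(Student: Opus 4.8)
The plan is to prove the claim by exhibiting, for every $m\times n$ grid, a single valid execution of the greedy algorithm whose output has exactly the stated cardinality. Since the statement only asserts that the greedy solution \emph{might} be this large, I do not need to bound the worst case from above; it suffices to construct one run in which every pick is consistent with the greedy rule (each chosen vertex attains the current maximum number of white neighbours) and in which ties are resolved adversarially. The target configuration is the inefficient density-$\tfrac13$ pattern of Figure~\ref{fig:patterns}(a), which wastes coverage compared with the density-$\tfrac15$ diagonal pattern of Figure~\ref{fig:patterns}(b): I will drive greedy into a ``cross-and-block'' tiling in which each $3\times 3$ region of the interior is dominated by three vertices rather than the two a diagonalization would use.

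Concretely, I would partition $G$ into $3\times 3$ tiles and run greedy in three waves dictated by its own max-white-degree rule. In the first wave every interior vertex has white-degree $4$, the maximum possible, so any such vertex is a legal pick; I choose the centres of a sublattice spaced three apart, one per tile. Because the spacing is $3$, the plus-shaped neighbourhoods of distinct centres are disjoint, and each centre still has all four neighbours white at the moment it is selected, so each such pick covers five fresh cells and remains a valid maximizer. Once all centres are placed, the residual white cells collapse into isolated $2\times 2$ blocks, one sitting in each interior tile. In the second wave the maximum white-degree has dropped to $2$, attained exactly by the cells of these blocks; greedy picks one cell of each block, covering three of its four cells. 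This leaves the diagonally opposite cell of each block with white-degree $0$, so in the third wave greedy is forced to spend one more vertex on each such leftover singleton. Thus every interior tile consumes three dominators: one centre plus two for its residual block.

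The count then follows by bookkeeping. The centres number one per tile, and since boundary tiles still admit a centre this gives $\left\lceil\frac m3\right\rceil\left\lceil\frac n3\right\rceil$ vertices. The complete residual $2\times 2$ blocks, by contrast, only form inside tiles that lie fully in the interior, so there are $\left\lfloor\frac m3\right\rfloor\left\lfloor\frac n3\right\rfloor$ of them, each contributing the two extra picks of waves two and three. Summing gives $\left\lceil\frac m3\right\rceil\left\lceil\frac n3\right\rceil+2\left\lfloor\frac m3\right\rfloor\left\lfloor\frac n3\right\rfloor$, as required. Throughout, I would check at each wave that the chosen vertex really is a current maximizer: white-degree $4$ while any centre remains unplaced, then $2$ while any block is untouched, then $0$, so that no higher-degree white vertex is ever skipped.

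The hard part will be two-fold. First, I must verify greedy-validity rigorously, that is, track how the white-degrees of \emph{all} remaining vertices evolve as centres and block cells are consumed, to be sure that at no intermediate step some unselected white vertex has strictly larger white-degree than the vertex I pick; the delicate case is confirming that the residual cells really have white-degree exactly $2$ (their neighbours outside the block having already turned grey) and that filling one block never raises a neighbouring block's degrees. Second, the boundary analysis is what produces the asymmetry between the ceilings (centres, which survive along the boundary) and the floors (residual blocks, which are truncated by the boundary and therefore do not all complete); making the exact counts match the formula requires a short case analysis on $m$ and $n$ modulo $3$, together with care that truncated boundary tiles contribute only their centre and no spurious extra singletons.
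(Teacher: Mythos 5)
Your construction is essentially the paper's own proof: the paper likewise drives the greedy into the axis-aligned period-3 lattice (the ``green'' pattern of Figure~\ref{fig:greedy}), observes that the undominated vertices left behind form isolated $2\times 2$ blocks each costing two additional picks, and counts $\left\lceil \frac{m}{3}\right\rceil \left\lceil \frac{n}{3}\right\rceil+2\left\lfloor \frac{m}{3}\right\rfloor \left\lfloor \frac{n}{3}\right\rfloor$ exactly as you do. Your three-wave white-degree bookkeeping is if anything more explicit than the paper's argument, which, like yours, leaves the greedy tie-breaking validity and the boundary case analysis informal.
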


\begin{proof}
As discussed in Section~\ref{sec:intro}, after the first vertex $v$ is added to the dominating set $S$, greedy algorithm chooses a vertex that does not share any neighbours with $v$. Although this is also a property of diagonal patterns, the set of all the closest vertices around $v$ that can be added to $S$ using diagonal patterns has size at most four (see Figure~\ref{fig:patterns}(b)). However, there are 12 vertices around $v$ that do not share any neighbours with $v$ and hence candidate to be added to $S$ in a  greedy algorithm, Figure~\ref{fig:greedy}(a). At each step of a greedy algorithm one of these 12 vertices is chosen arbitrarily. However, choosing only all red vertices or all blue vertices would start developing a diagonal pattern. Other combinations of candidate vertices would fail to diagonalize the grid and some vertices of the graph would be dominated by more than one dominating vertex. Hence, the size of the constructed dominating set would be greater than $\left\lceil \frac{(m+2)(n+2)}{5}\right\rceil$.

In particular, the algorithm might add all the green vertices to $S$ and repeat the same pattern in the grid, Figure~\ref{fig:greedy}(b). However, using this pattern, between any four green vertices there remains a set of four vertices that are not dominated by any vertex in $S$. These vertices are highlighted by dotted rectangles in Figure~\ref{fig:greedy}(b). To dominate each of these sets of vertices at least two extra dominating vertices should be added to $S$. Therefore, the number of obtained dominating vertices would be at least $\left\lceil \frac{m}{3}\right\rceil \left\lceil \frac{n}{3}\right\rceil+2\left\lfloor \frac{m}{3}\right\rfloor \left\lfloor \frac{n}{3}\right\rfloor$, which is much greater than the size of the dominating set obtained by Chang's construction, i.e., $\left\lceil \frac{(m+2)(n+2)}{5}\right\rceil$.
\end{proof}

\begin{figure}[t]
\includegraphics[scale=0.3]{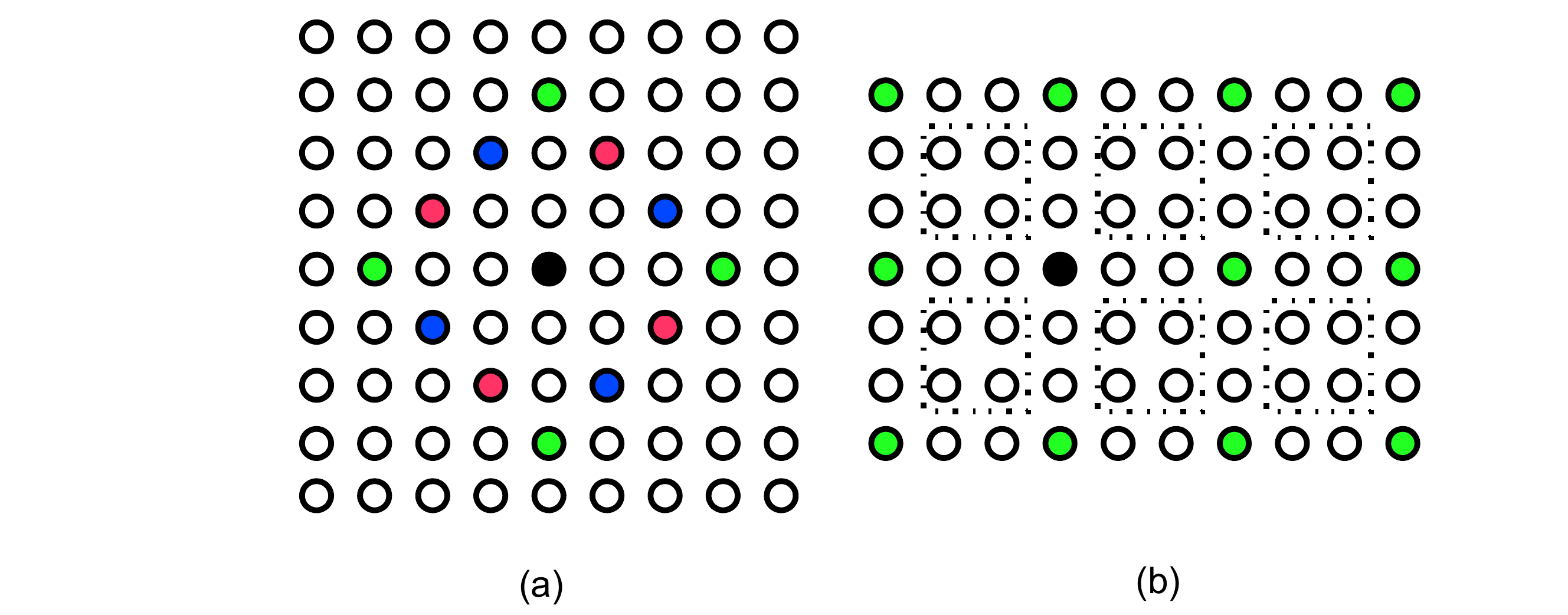}
\caption{In Figure (a), after adding the black vertex to the dominating set, the next vertex added to that set can be any of the blue, red or green vertices, without dominating any vertex by two dominating vertices. In (b), a dominating set is built by starting from the black vertex and keep adding the green vertices shown in (a) to the set. Each dotted rectangle contains four vertices that are not dominating by the obtained dominating set.}
\label{fig:greedy}
\end{figure}

\section{Distributed Grid Domination}
\label{sec:dist-domin}
In the preceding section, a centralized algorithm was discussed that produced a dominating set $S$ for a given $m\times n$ grid $G$ such that $|S|\le \left\lceil \frac{(m+2)(n+2)}{5}\right\rceil$. In this section, we show how to achieve the same upper-bound in a distributed way.

\subsection{Model and Notation}
\label{subsec:model2}
Here we assume that the environment is an $m\times n$ grid $G=(V,E)$ with $m,n\in \mathbb{N}$. The goal is to dominate the grid environment in a distributed fashion using several robots (or agents) without any knowledge of environment size. Initially, there exist $k$ agents in the environment, where $k$ can be smaller or greater than the number of agents needed to dominate the grid. The following assumptions are made for the grid and agents.

\emph{Grid Assumptions:} Agents can be located only on the vertices of the grid and are able to move between the grid vertices only on the edges of the grid. At each moment, a vertex can contain more than one agent. We refer to the vertices using the standard Cartesian coordinates defined in Section \ref{sec:background}.

\emph{Agent Assumptions:} The agents, denoted by $a_1,\ldots, a_{k}$, are initially located at arbitrary vertices on the grid.
The agents have three modes: (a) \emph{sleep}, (b) \emph{active}, and (c) \emph{settled}. The mode of an agent $a$ and the vertex it is located at are denoted by $\mathrm{mode}(a)$ and $v(a)$, respectively. Only agents in the active and settled modes are able to communicate. At the beginning of the procedure, all the agents are in the sleep mode. During each \emph{epoch}, that is, a time interval with a specified length, one agent goes to active mode. The activation sequence of agents is arbitrary (e.g., it can be scheduled in advance or it can be random). The active agent can communicate with the settled agents to perform the distributed dominating set algorithm. Once an agent activates and performs its part in the algorithm, it goes to settled mode. Ultimately, all the settled agents go back to sleep mode and will not activate again.

Here, each agent is equipped with suitable angle-of-arrival (bearing) and range sensors. Using these sensors, agent $a$ computes the coordinates of other agents in its own coordinate frame $\Sigma_a$ with its origin at $v(a)$ and an arbitrary orientation, fixed relative to agent $a$. Each agent also has a compass to determine its heading direction. Additionally, agents are equipped with short-ranged proximity sensors to sense the environment boundary. Agents are able to sense the boundary only if they are on a vertex $v$ whose neighbour is a boundary vertex of the grid, i.e., $N(v)\cap B(G)\ne \emptyset$. The compass helps agents to distinguish which of the four boundary edges they are approaching.

\subsection{Overview of Algorithm}
\label{subsec:dist-alg-summary}
The main idea in this algorithm is to implement the diagonal pattern defined in Section \ref{sec:center-domin} on grid $G=(V,E)$, using communications among active and settled agents. A special unit called a \emph{module} is defined for the active and settled agents. A module is a cross-like shape consisting of the agent at its center with the associated dominated vertices in the arms of the cross (see Figure \ref{fig:patterns}(b)).
For each module $m$, the vertex that contains the agent, i.e., the center vertex, is referred to as the \emph{module center}, denoted by $c(m)$. As an agent moves on the grid to contribute to the diagonal pattern, its module moves with it as well. Modules $m_1$ and $m_2$ with module centers $c(m_1)=v_{i,j}$ and $c(m_2)=v_{i',j'}$ can connect to each other if $v_{i',j'}\in \{ v_{i+1,j+2},v_{i+2,j-1},v_{i-1,j-2},v_{i-2,j+1}\}$ (see Figure \ref{fig:steps}(f)). This condition is called the \emph{module connection condition}. The set of centers of the connected modules is called a \emph{cluster}. We will later show that the module connection condition ensures that the module centers are a diagonalization of the vertices covered by the modules in the cluster.

\emph{Valid Slots:} Let $G'=(V',E')$ be the super-grid of $G$. A vertex $v_{a,b}\in V'$ is called a \emph{slot} if there exists a module $m$ in the cluster with center $v_{i,j}$ such that $v_{a,b}\in \{ v_{i+1,j+2},v_{i+2,j-1},v_{i-1,j-2},v_{i-2,j+1}\}$ and $v_{a,b}$ is not already a center for a module in the cluster.
For a settled agent $a$ located at $v(a)$, denote the set of all its slots by $\mathrm{slots}(a)$. Recall that the \emph{orphan} of a vertex $v\in V'\backslash V$, i.e., $\mathrm{orphan}(v)$, is a vertex $u\in B(G)$ such that $u\in N(v)$. The set of all \emph{valid slots} for settled agent $a$, denoted by $\mathrm{vslots}(a)$, is defined as $(\mathrm{slots}(a)\cap V) \cup \mathrm{orphan}(\mathrm{slots}(a)\backslash V)$. Newly activated agents can settle only on the valid slots of the settled agents.

\emph{Updating Valid Slots:} When an active agent settles, it creates the list of its valid slots as follows. If a settled agent $a$ cannot sense the boundary (i.e., it has no neighbour on the boundary), $\mathrm{slots}(a)\backslash V=\emptyset$ and hence $\mathrm{vslots}(a)=\mathrm{slots}(a)$. Conversely, a settled agent can also determine which of its slots lie outside the grid boundary (Figure \ref{fig:orphans}(a)). Each newly settled agent marks the vertices on the grid boundary that are neighbours of $\mathrm{slots}(a)\backslash V$ as \emph{orphans} and so $\mathrm{vslots}(a)=(\mathrm{slots}(a)\cap V) \cup \mathrm{orphan}(\mathrm{slots}(a)\backslash V)$ (Figure \ref{fig:orphans}(b)). By the definition of valid slots, no valid slot exists in an orphan's neighbourhood. Therefore, each orphan needs one agent to be located on itself or one of its neighbours to be dominated. For simplicity we always put an agent on the orphan itself.

When an agent activates, it transmits a signal to find the settled agents on the grid and waits for some specified time for a response from them. Since there is no settled agent in the environment when the first agent activates, it receives no signal and concludes it is the  first one activated. Thus, the agent stays at its initial location and goes to the settled mode. Subsequently, each active agent translates to the closest settled agent.\footnote{Note that for completeness of the algorithm, it is not necessary for the active agents to go to the closest settled agents. An active agent can go toward any arbitrary settled agent to occupy its valid slot.}

\subsection{Distributed Grid Domination Algorithm}
\label{susec:dist-alg}
During the distributed grid domination algorithm, active agents can either contribute to grid diagonalization by locating on non-orphan valid slots or can settle on orphans. In each epoch, the set of the non-orphan vertices containing the previously settled agents is called the \emph{cluster} and is denoted by $C$, while the set of occupied orphans is denoted by $P$. At the beginning of the algorithm $C=P=\emptyset$. It should be mentioned that $C$ and $P$ are not saved by any agent, and are used only to aid in the presentation of the algorithm. Moreover, we denote the set of all settled agents at each moment by $A_s$, where at the beginning of the algorithm $A_s=\emptyset$. Also if agent $a$ is already settled and is now in sleep mode $\mathrm{done}(a)=1$, otherwise $\mathrm{done}(a)=0$.

\begin{algorithm2e} 
  \small 
  \DontPrintSemicolon 
  \KwIn{An $m\times n$ Grid and a set of agents $A$}
  \While {$\exists$ agent $a\in A$ with $\mathrm{mode}(a)=sleep$ and $\mathrm{done}(a)=0$}{
  $\mathrm{mode}(a):=\mathrm{active}$, $a$ sends out signal to $A_s$ (Figure \ref{fig:steps}(b)). \;
  \If{$A_s\neq \emptyset$}{
   At least one agent in $A_s$ sends a signal out to $a$. \; 
   }
  \If{$a$ receives no signal}{
    $\mathrm{mode}(a):=\mathrm{settled}$ (Figure \ref{fig:steps}(a)).
    \;
        $A_s:=\{a\}$. \;
        $C:=\{v(a)\}$. \;
        Skip to Line 22. \;
      }
      \eIf{$\mathrm{vslots}(A_s)\neq\emptyset$}{
      Agent $a$ computes the closest settled agent $s\in A_s$ and
      notifies $A_s$. \;
      Agent $s$ sends the coordinates of $\mathrm{vslots}(s)$ to $a$. \;
      Agent $a$ moves toward the closest $v\in\mathrm{vslots}(s)$.\;
    \If{$v(a)=v$}{
        $\mathrm{mode}(a):=\mathrm{settled}$ (Figure \ref{fig:steps}(d)). \;
        $A_s:=A_s \cup \{a\}$. \;
    }
    \eIf{$v(a)$ and $v(s)$ satisfy the module connection condition}{
        $C:=C\cup \{v(a)\}$. \;
    }{
       $P:=P\cup \{v(a)\}$ (Figure \ref{fig:orphans}(c)). \;
        $\mathrm{mode}(a):=\mathrm{sleep}$. \;
    }
    \For{$i = 1 \to |A_s|$}{
        \If{$v(A_s(i))\in C$ and $\mathrm{mode}(A_s(i))\neq \mathrm{sleep}$}{
            Update $\mathrm{vslots}(A_s(i))$ (Figures \ref{fig:steps}(e) and \ref{fig:orphans}(d)). \;
            \If{$\mathrm{vslots}(A_s(i))=\emptyset$}{
                $\mathrm{mode}(A_s(i)):=\mathrm{sleep}$ (Figure \ref{fig:steps}(f)). \;
                $\mathrm{done}(A_s(i)):=1$.\;
            }
        }
    }
    }{
    Break.\;
    }
    }
    The remaining non-activated agents leave the grid. \;
    \caption{\textsc{Distributed Grid Domination}}
    \label{alg:dist-domin}
\end{algorithm2e}

\begin{remark}(Comments on Algorithm)\\
1) Since agents can move only on the grid edges, the distance
  between two vertices can be computed simply by adding their
  $x$-coordinate and $y$-coordinate differences, i.e., $\Delta x$ and
  $\Delta y$. There exist many shortest paths between any two vertices
  and agent $a$ arbitrarily chooses one of them to traverse; for
  instance it can first traverse on the $x$-coordinate and then on the
  $y$-coordinate. \\
2) In Step 11, agent $a$ locates $s$ in $\Sigma_a$ (i.e., coordinate frame of $a$), while $\mathrm{vslots}(s)$ is computed by $s$ in
  $\Sigma_s$ in Step 12. In Step 13, agent $a$ converts the coordinates
  of $\mathrm{vslots}(s)$ from $\Sigma_s$ to $\Sigma_a$ for
  traversing, using relative sensing techniques
  \cite{PSFBA'08}. \\
3) When an agent settles, all settled agents wait for a specified amount of time for the next agent to activate. If no agent activates, Algorithm \ref{alg:dist-domin} halts and the previously settled agents construct a subset of a dominating set of the grid. This happens when the initial number of agents is not sufficient to dominate the grid.\\
4) If the agents are equipped with GPS, then they can agree on a fixed diagonalization (i.e., agree on a value of $r$), and move to the vertices $U$ in the diagonalization.  At this point, only orphan vertices exist. The remaining agents can move along the boundary to find and cover all orphans and consequently dominate the grid. Hence, in this paper we study the case that agents are not armed with GPS.
\end{remark}

\begin{figure}[t]
\includegraphics[scale=0.27]{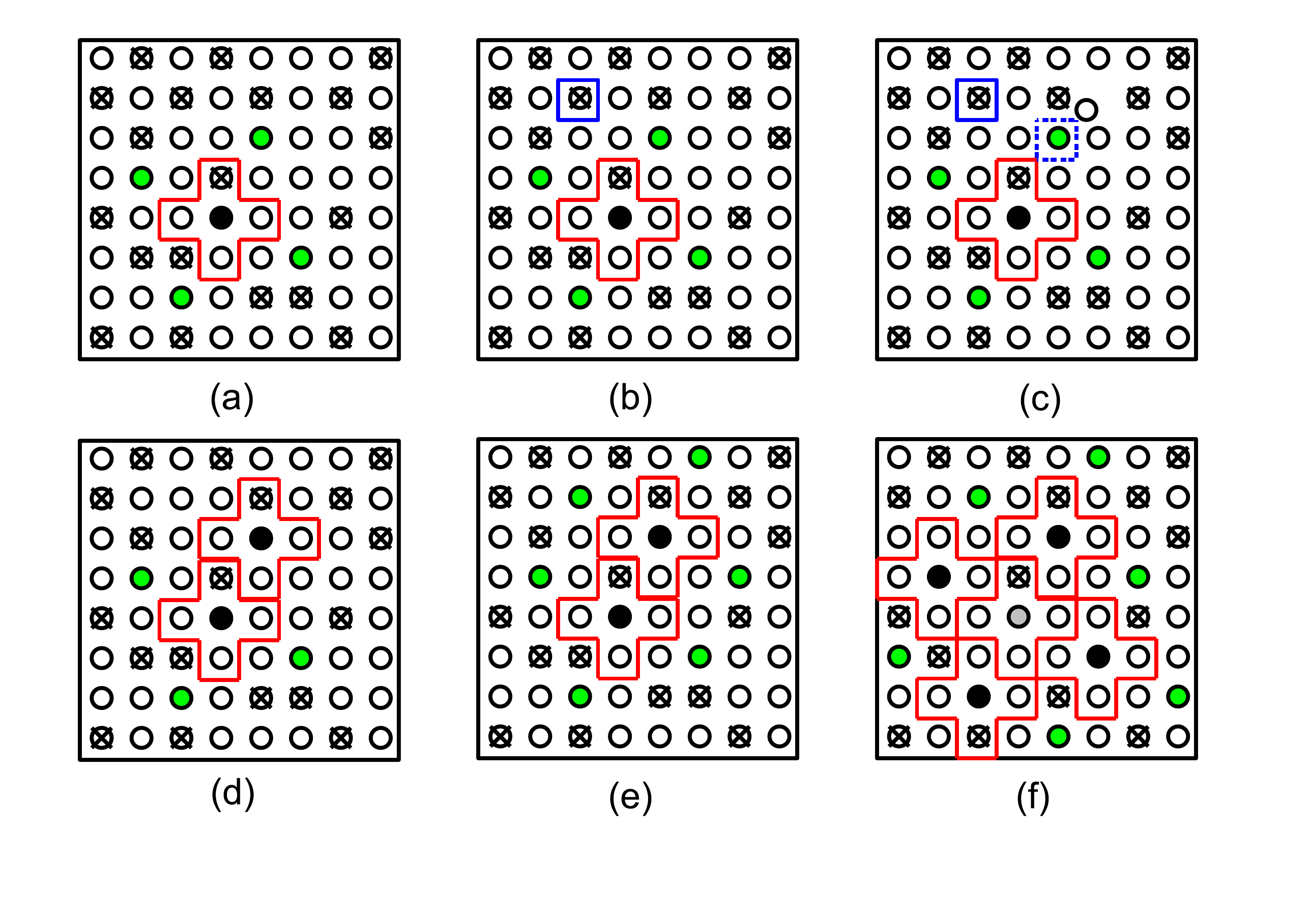}
\vspace{-1cm}
\caption{Non-activated agents are marked by black crosses and the already settled agents are shown by black circles. Agents in $C$ have red crosses as their modules. Figure (a) shows the first active agent, as in Step 6. In (b), an active agent is highlighted by a blue square. Step 13 is depicted in (c), where a dashed blue square shows the closest valid slot to the active agent. In (d), the active agent moves to the valid slot and joins $C$, as in Step 15. In (e), the list of valid slots is updated as in Step 4. In (f), the grey circle shows an agent that goes from settled to sleep mode.}
\label{fig:steps}
\end{figure}

\begin{figure}[t]
\includegraphics[scale=0.27]{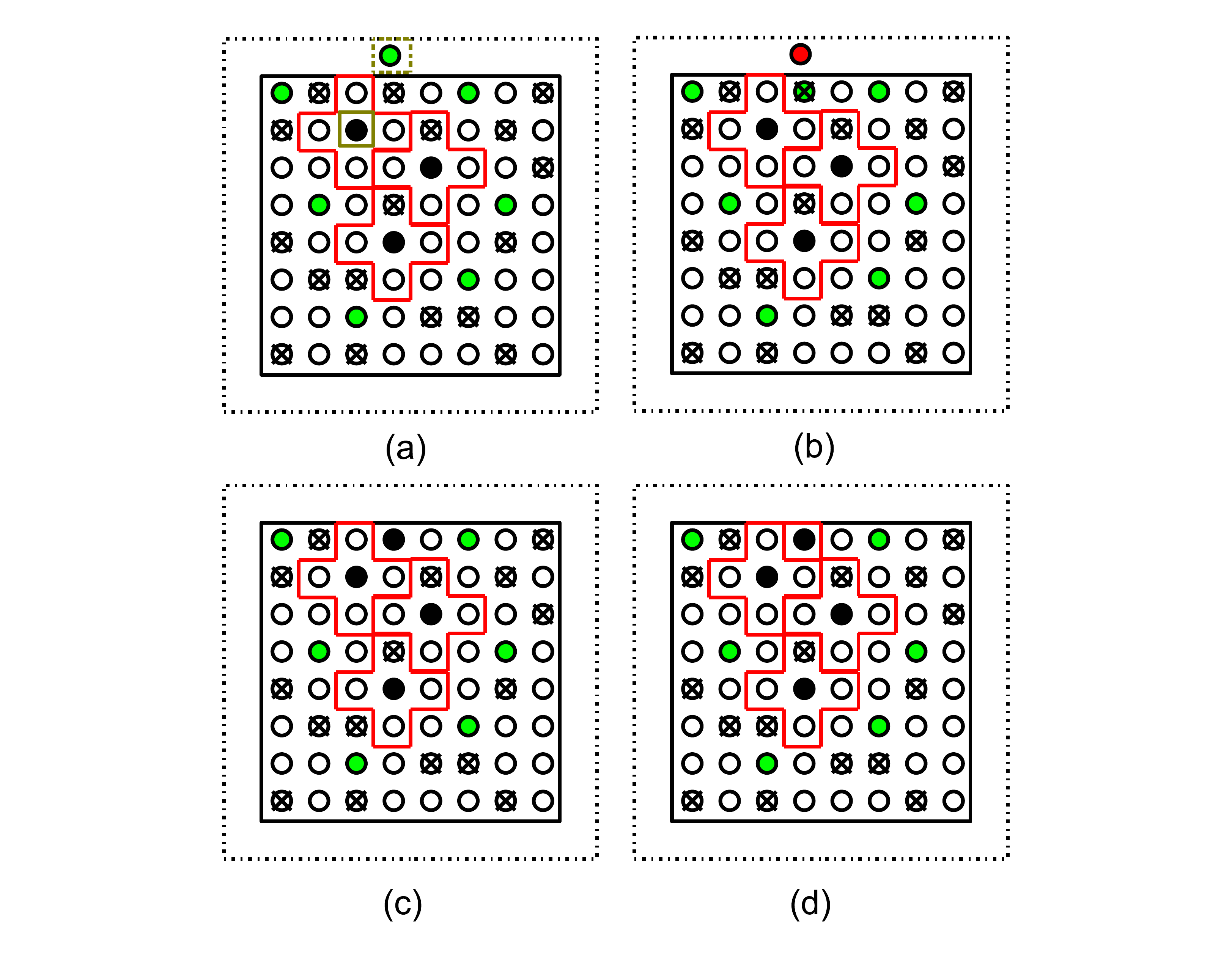}
\vspace{0cm}
\caption{Non-activated agents are marked by black crosses and the already settled agents are shown by black circles, with red crosses as their modules. In (a), a settled agent, highlighted by a solid blue square, realizes one of its slots, shown by a dashed blue square, is outside the grid boundary. In (b), the settled agent replaces the slot outside the grid boundary with its orphan and name the resulting set as its valid slots. In (c), the active agent locates at the orphan. Figure (d) shows that an agent on an orphan has no valid slot.}
\label{fig:orphans}
\end{figure}

\subsection{Distributed Algorithm Analysis}
\label{subsec:dist-alg-analys}
We now prove that the set of vertices determined by Algorithm \ref{alg:dist-domin}, i.e., $C\cup P$, creates a dominating set for the grid. Recall that at each epoch, $C$ is the set of non-orphan vertices containing the previously settled agents and $P$ is the set of occupied orphans.

\begin{lemma}
\label{lem:modules-are-diag}
During the operation of Algorithm \ref{alg:dist-domin}, the module connection condition forces the vertices in $C$ to create a diagonal pattern.
\end{lemma}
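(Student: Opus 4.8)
The plan is to reduce the claim to a single modular-arithmetic invariant and then propagate it through the cluster by induction on the order in which vertices enter $C$. Identifying the Cartesian coordinates $(x,y)$ of Definition \ref{def:diag-patt} with the index pair $(i,j)$, the condition for a set $U$ to be a diagonal pattern is that every center $v_{i,j}\in U$ shares a common residue $j-2i\equiv r\pmod 5$. The heart of the argument is to verify that the module connection condition \emph{preserves} this residue: if $c(m_1)=v_{i,j}$ and $c(m_2)=v_{i',j'}$ satisfy the condition, then $j'-2i'\equiv j-2i\pmod 5$.

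First I would carry out this verification by checking each of the four admissible offsets $(i',j')\in\{(i+1,j+2),(i+2,j-1),(i-1,j-2),(i-2,j+1)\}$. A direct computation gives $j'-2i'-(j-2i)\in\{0,-5,0,5\}$ respectively, so in every case $j'-2i'\equiv j-2i\pmod 5$. Since the four offsets are closed under negation, the relation is symmetric, and the same conclusion holds regardless of which of the two connected modules is regarded as the servicing center.

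Next I would set up the induction. In Algorithm \ref{alg:dist-domin}, $C$ starts as the singleton $\{v(a)\}$ of the first settled agent, which trivially forms a diagonal pattern by choosing $r:=(j-2i)\bmod 5$ for that vertex. For the inductive step, observe that a vertex is appended to $C$ only on the line $C:=C\cup\{v(a)\}$, which executes precisely when $v(a)$ and the center $v(s)$ of the servicing settled agent $s$ satisfy the module connection condition. By the inductive hypothesis $v(s)\in C$ has residue $r$, and by the invariant just established $v(a)$ inherits the same residue $r$. Hence $C\cup\{v(a)\}$ remains a diagonal pattern with the same $r$, which completes the induction and shows that $C$ forms a diagonal pattern throughout the execution.

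The step I expect to require the most care is not the modular computation itself but the bookkeeping that guarantees every newly added center connects to a center already in $C$: I must confirm from the control flow of Algorithm \ref{alg:dist-domin} that the assignment $C:=C\cup\{v(a)\}$ is reached only after $a$ has settled on a valid slot of a settled agent $s$ whose center lies in $C$, so that the module connection edge indeed links $v(a)$ to an existing cluster member. Once this is in place, the invariant propagates unambiguously and no vertex can enter $C$ with a deviating residue.
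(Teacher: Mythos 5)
Your proof is correct and follows essentially the same route as the paper's: induction on the order in which vertices enter $C$, with the key observation that each of the four admissible offsets in the module connection condition changes $j-2i$ by $0$ or $\pm 5$, so the residue $r \pmod 5$ established by the first settled agent is preserved. Your version is in fact slightly more careful than the paper's, which asserts the residue computation with ``it can be seen that'' and does not flag the control-flow point (that the servicing agent's center must already lie in $C$) that you correctly identify as needing verification.
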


\begin{proof}
This will be proved using induction on the size of $C$ during the operation of the algorithm. According to the module connection condition, the module of agent $a$ located at vertex $v(a)=v_{i',j'}\notin C$ can connect to the module of vertex $v_{i,j}\in C$ if $v_{i',j'}\in \{v_{i+1,j+2},v_{i+2,j-1},v_{i-1,j-2},v_{i-2,j+1}\}$. The base of induction is $|C|=0$, when the first agent is about to be added to $C$. In this case, the first agent settles at its current location $v(a)=v_{i,j}$ and establishes the value $r\equiv j-2i \pmod{5}$.

For $|C|>1$, $C$ already has a diagonal pattern and an active agent $a$ at $v(a)=v_{i',j'}$ aims to join it by connecting to a module centered at $v_{i,j}$. Since $v_{i,j}$ is already in $C$, $j-2i\equiv r \pmod{5}$. It can be seen that for a vertex $v_{i',j'}$ that satisfies the module connection condition with respect to $v_{i,j}$ we have $j'-2i'\equiv r \pmod{5}$. Therefore, the resulting set has a diagonal pattern.
\end{proof}

\begin{theorem}
\label{theo:error-dist}
The number of agents used to dominate an $m\times n$ grid $G=(V,E)$ by Algorithm \ref{alg:dist-domin} is upper-bounded by $\left\lceil \frac{(m+2)(n+2)}{5}\right\rceil$. For grids with $16\le m\le n$, the number of agents used is upper-bounded by $\gamma_{m,n}+5$.
\end{theorem}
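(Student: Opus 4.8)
The plan is to show that the set $C\cup P$ produced by Algorithm~\ref{alg:dist-domin} is, up to relabelling, exactly a diagonalization-plus-projection of the super-grid $G'=(V',E')$, so that the size bound follows from the same counting argument used for the centralized construction in Theorem~\ref{theo:upper-bound}. The object I would track is the \emph{lifted} set
\[
\tilde U \;=\; C \;\cup\; \{\, v\in V'\backslash V : \mathrm{orphan}(v)\in P \,\},
\]
which reassigns every orphan-agent in $P$ back to the super-grid slot that created it. I would establish that $\tilde U$ is a diagonal pattern on $G'$, that $|\tilde U| = |C|+|P|$, and that $\tilde U$ is a diagonalization of $G'$ at termination; the two claimed bounds then drop out of the fact, recalled before Theorem~\ref{theo:upper-bound}, that any diagonalization of $G'$ has at most $\lceil (m+2)(n+2)/5\rceil$ vertices.

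First I would record the arithmetic fact underlying everything: each of the four offsets in the module connection condition, namely $(1,2),(2,-1),(-1,-2),(-2,1)$, preserves $y-2x \pmod 5$. Hence every slot of a module centred at $v_{i,j}\in C$ --- whether it lies in $V$ or in $V'\backslash V$ --- carries the same residue $r$ that Lemma~\ref{lem:modules-are-diag} assigns to $C$, so every super-grid vertex in $\tilde U\backslash C$ satisfies $y-2x\equiv r\pmod 5$ and $\tilde U$ is a diagonal pattern on $G'$ (Definition~\ref{def:diag-patt}). Because a diagonal pattern dominates each vertex at most once, no boundary vertex can be the orphan of two distinct slots; thus $v\mapsto \mathrm{orphan}(v)$ is injective on $\tilde U\backslash C$ and $|\tilde U\backslash C|=|P|$, giving $|\tilde U|=|C|+|P|$, which is precisely the number of agents used (agents that find no valid slot return to sleep and leave the grid, so they are not counted).

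The crux --- and the step I expect to be the main obstacle --- is completeness: I must argue that when no settled agent has a nonempty valid-slot list, $C\cup P$ actually dominates $G$ and $\tilde U$ is a \emph{maximal} diagonal pattern of $G'$. The key structural observation is that the two offsets $(1,2)$ and $(2,-1)$ span a sublattice of index $5$ in $\mathbb{Z}^2$ (their determinant is $-5$), which is exactly the lattice of residue-$r$ translations. So, starting from the vertex fixed by the first agent, repeated module connections can reach every residue-$r$ vertex of $G'$: whenever such a vertex of $V'$ is still unoccupied it is a valid slot of some settled agent (its interior representative, or its orphan if it lies in $V'\backslash V$), and the while-loop keeps activating agents until all such slots are filled, provided the supply of agents suffices. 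Hence at termination $\tilde U$ contains every residue-$r$ vertex of $G'$, i.e.\ it is a diagonalization of $G'$ (Definition~\ref{def:diagonalization}); by the projection argument preceding Theorem~\ref{theo:upper-bound}, the interior vertices of $G$ are then dominated by $C$ and the boundary orphans by $P$. I would have to handle carefully the bookkeeping that each orphan is occupied exactly once and that no interior vertex is skipped.

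Finally I would close the count. Since $\tilde U$ is a diagonalization of the $(m+2)\times(n+2)$ grid $G'$, its cardinality is at most $\lceil (m+2)(n+2)/5\rceil$; with $|C|+|P|=|\tilde U|$ this gives the first bound. For $16\le m\le n$, I would invoke Theorem~\ref{theo:cited-4} to write $\gamma_{m,n}=\lfloor (m+2)(n+2)/5\rfloor-4$ and use $\lceil x\rceil\le\lfloor x\rfloor+1$ to conclude
\[
|C|+|P|\ \le\ \Big\lceil \tfrac{(m+2)(n+2)}{5}\Big\rceil \ \le\ \Big\lfloor \tfrac{(m+2)(n+2)}{5}\Big\rfloor + 1 \ =\ \gamma_{m,n}+5,
\]
which is the second bound.
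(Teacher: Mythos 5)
Your proposal takes essentially the same route as the paper's proof: identify the terminal configuration of Algorithm \ref{alg:dist-domin} with a diagonalization-plus-projection of the super-grid $G'$ (via Lemma \ref{lem:modules-are-diag} and the empty-valid-slot termination condition), then invoke Theorem \ref{theo:upper-bound} for the $\left\lceil \frac{(m+2)(n+2)}{5}\right\rceil$ count and Theorem \ref{theo:cited-4} for the $\gamma_{m,n}+5$ bound. Your explicit lifted set $\tilde U$, the injectivity of the orphan map giving $|\tilde U|=|C|+|P|$, and the lattice-index argument for maximality are all details the paper leaves implicit, so the argument is the same one, only carried out more carefully.
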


\begin{proof}
We first prove Algorithm \ref{alg:dist-domin} is correct and then show the upper-bound holds. Let $G'=(V',E')$ be the super-grid of $G$ and
$C$ denote the non-orphan vertices occupied by previously settled agents when the algorithm finishes. By Lemma \ref{lem:modules-are-diag}, $C$ constitutes a diagonal pattern and by the condition in Step 10 of the algorithm no other agent can be added to $C$; therefore, $C$ diagonalizes $G$. Moreover, orphans are neighbours of the vertices in $V'\backslash V$ that are initially detected as slots by the settled agents and hence diagonalize $G'$ by Lemma \ref{lem:modules-are-diag}. Thus, locating one agent on each orphan is equivalent to the projection process. Hence, if a sufficient number of agents exist in the grid, Algorithm \ref{alg:dist-domin} provides a dominating set for $G$ (from Theorem \ref{theo:upper-bound}). Consequently, the algorithm is \emph{complete}, meaning it always finds a solution, if one exists.

Furthermore, since Algorithm \ref{alg:dist-domin} performs diagonalization and projection on $G$, from Theorem \ref{theo:upper-bound} it immediately follows that the number of agents used in the algorithm, $n_a$, is upper-bounded by $\left\lceil \frac{(m+2)(n+2)}{5}\right\rceil$. Also by Theorem \ref{theo:cited-4}, for $16\le m\le n$ we have $n_a-\gamma_{m,n}\le 5$.
\end{proof}

Note that while the agents do not form a dominating set for $G$, an active agent finds a valid slot in at most $n+m$ steps. A step is a specified time duration within which an agent performs its basic operation, such as traversing an edge or transmitting signals. Since the number of agents needed to dominate an $m\times n$ grid is less than $mn$, Algorithm \ref{alg:dist-domin} takes at most $mn(m+n)$ steps to construct a dominating set for $G$.

\subsection{Simulations}
\label{subsec:simulation}
To augment and examine the results discussed in this section, we simulated Algorithm \ref{alg:dist-domin} on various grids and different initial configurations of agents on grid vertices. Figure \ref{fig:sim1}(a) demonstrates a $10\times 15$ grid graph with 41 agents located randomly on it. The first agent that activates is located on vertex $(5,9)$ and hence stays on that vertex. Figure \ref{fig:sim1}(b) shows the location of agents when Algorithm \ref{alg:dist-domin} is complete. It can be seen that every vertex is dominated. However, there are some agents located at vertices (such as $(6,5), $$(6,12)$ and $(7,15)$) that are never activated in the algorithm. These are the additional agents that are not required to dominate the grid and they are removed in Figure \ref{fig:sim1}(c).

\begin{figure}[t]
        \begin{subfigure}[b]{0.28\linewidth}
                \centering
                \includegraphics[width=\linewidth]{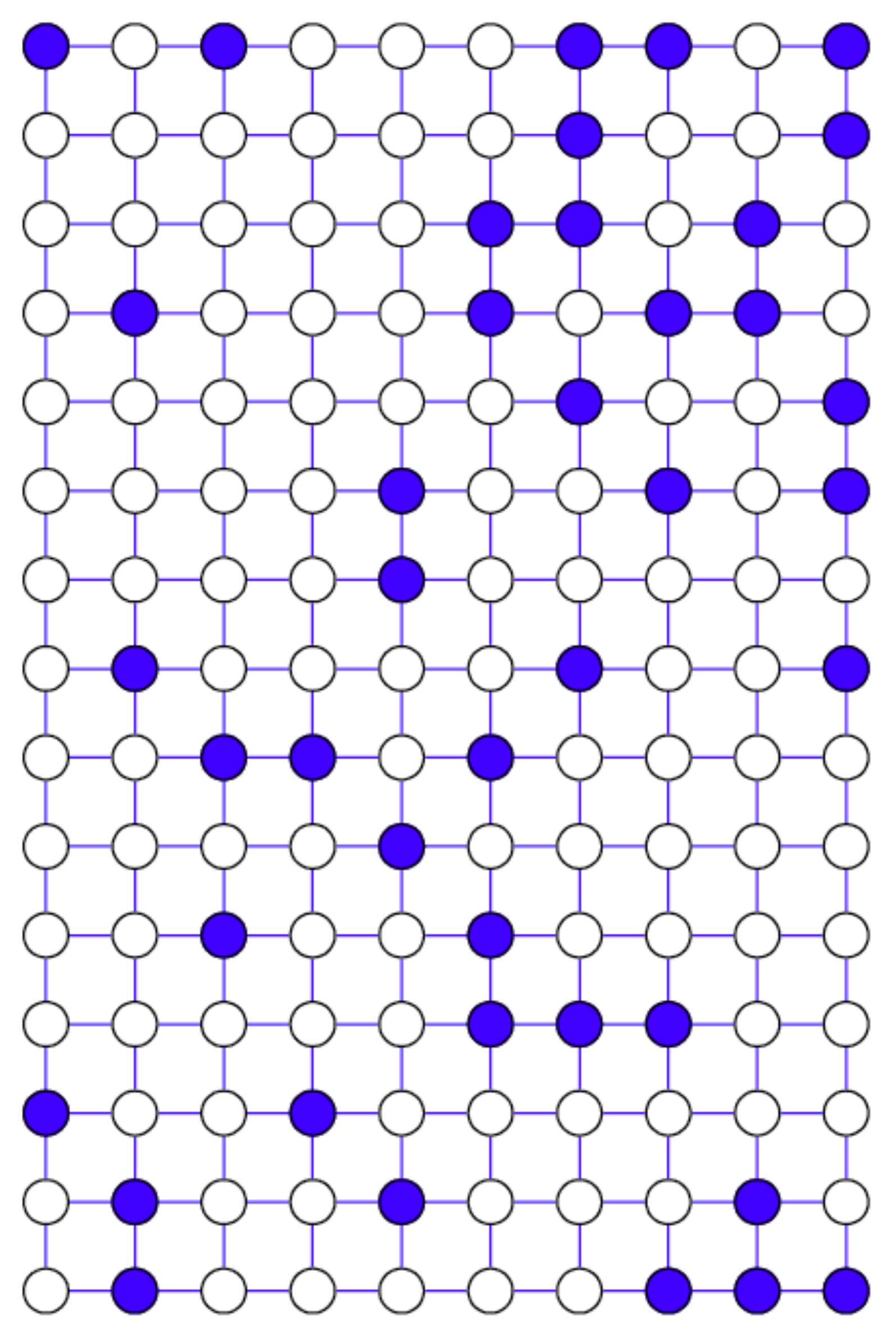}
                \caption{}
        \end{subfigure}
        ~ 
  \hfill
        \begin{subfigure}[b]{0.28\linewidth}
                \centering
                \includegraphics[width=\linewidth]{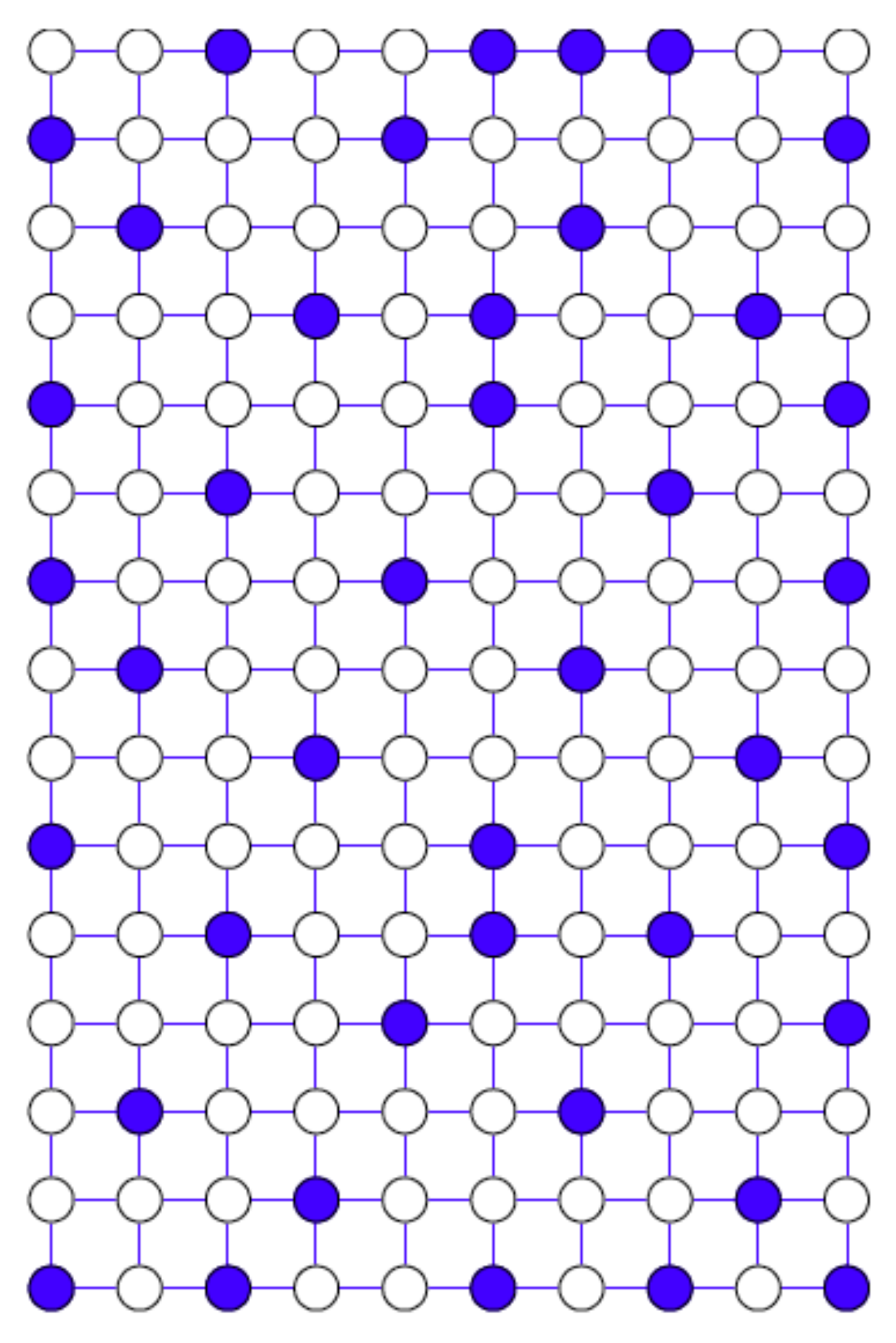}
                \caption{}
        \end{subfigure}
        ~
  \hfill
 \begin{subfigure}[b]{0.28\linewidth}
                \centering
                \includegraphics[width=\linewidth]{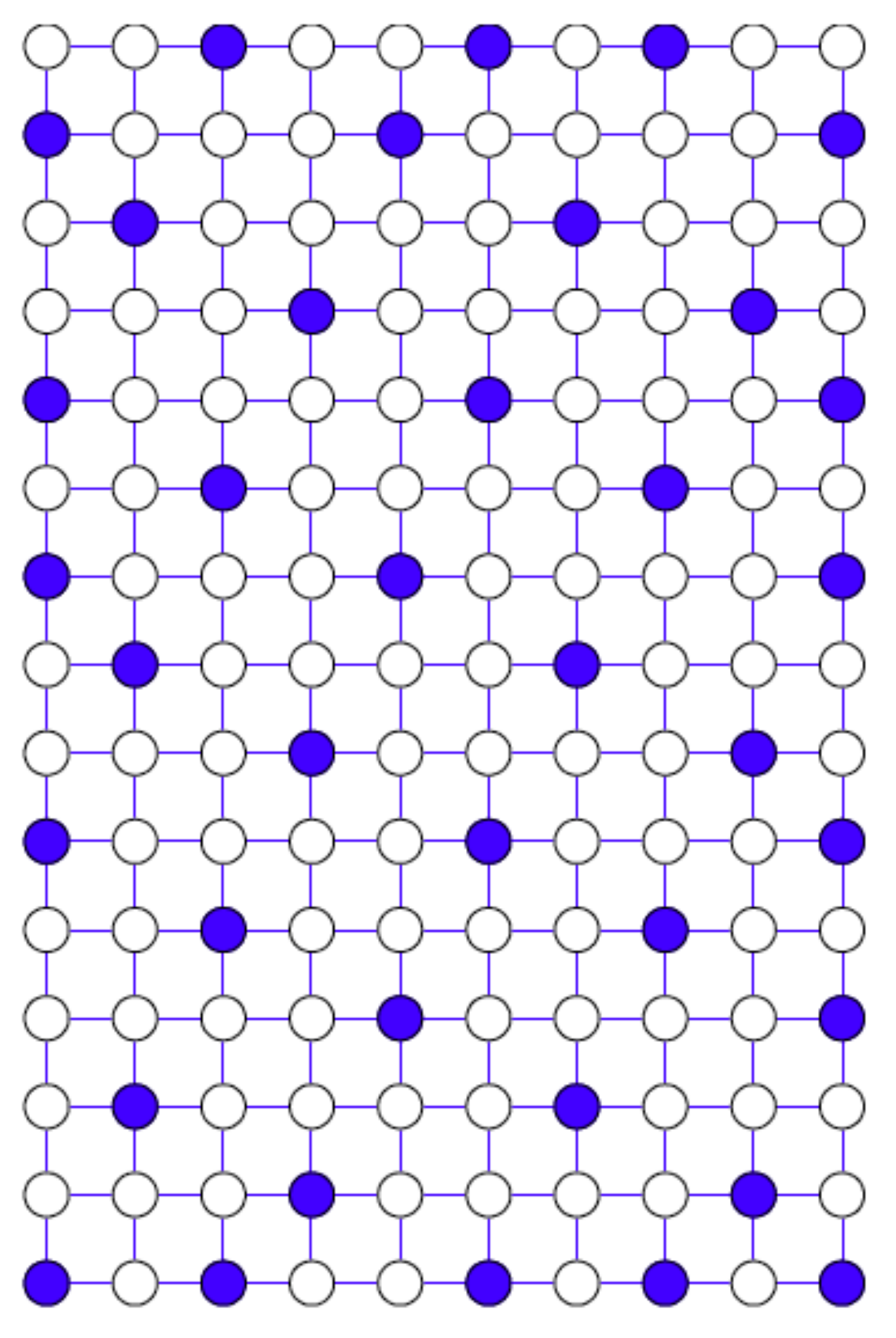}
                \caption{}
        \end{subfigure}
        ~
        \caption{A $10\times 15$ grid is depicted with agents shown in blue. In (a) the initial configuration of the agents is shown and (b) shows the agents configuration when Algorithm \ref{alg:dist-domin} is finished. In (c), all non-settled and non-asleep agents leave the grid}
        \label{fig:sim1}
\end{figure}

\section{$k$-Distance Domination on Grids}
\label{sec:k-distance}
In this section we generalize Chang's algorithm for grid domination, discussed in Section~\ref{sec:center-domin}, to the $k$-distance dominating set problem, where a vertex dominates all the vertices within distance $k$ from it. Before defining the problem formally, let $d(u,v)$ denote the shortest path distance between vertices $v,u\in V$ in $G=(V,E)$. Moreover, vertex $u\in V$ is defined as a \emph{$k$-neighbour} of vertex $v\in V$, if $0<d(u,v)\le k$. The set of all $k$-neighbours of $v$ is denoted by $N^k(v)$. Moreover, for a set of vertices $W\subset V$ and a vertex $v\in V\backslash W$, we have $u=\mathrm{friend}^k(v,W)$ if (a) $u\in W$, (b) $u\in N^k(v)$, and (c) $d(v,u)\le d(v,w),\forall w\in W$.

\begin{definition}($k$-Distance Dominating Set Problem)
\label{def:k-distance}
Given a graph $G=(V,E)$, the $k$-distance dominating set problem is to find a set of vertices $S\subseteq V$ such that for every vertex $v\in V\backslash S$ there exists a vertex $u\in S$ where $u\in N^k(v)$. The cardinality of a smallest $k$-distance dominating set for $G$ is called the \emph{$k$-distance domination number} of $G$ and is denoted by $\gamma^k(G)$ \cite{Henning'98}.
\end{definition}

We say that vertex $u\in S$ \emph{$k$-distance dominates} $v\in V\backslash S$ if $d(u,v)\le k$. The regular dominating set problem is a special case of the $k$-distance dominating set problem, where $k=1$. Therefore, $k$-distance domination is also NP-hard on general graphs. However, to the best of our knowledge the $k$-distance domination number of grids is not known and the complexity of the problem is open. In Section \ref{subsec:upper-bound-k-distance}, we
generalize the approaches in Sections \ref{sec:center-domin} and \ref{sec:dist-domin} to provide a $k$-distance dominating set for an $m\times n$ grid graph $G$.

\subsection{Centralized $k$-Distance Domination on Grids}
\label{subsec:upper-bound-k-distance}
Before discussing the $k$-distance domination algorithms on grids we introduce the following definitions.

\begin{definition}($k$-Sub-Grids and $k$-Super-Grids)
\label{def:k-super}
An $m\times n$ grid $G=(V,E)$ is called a \emph{$k$-sub-grid} of an $m'\times n'$ grid $G'=(V',E')$ if $G$ is induced by vertices $v'_{i,j}\in V'$, where $k+1\le i\le m'-k$ and $k+1\le j\le n'-k$. If $G$ is a $k$-sub-grid of $G'$, $G'$ is called the \emph{$k$-super-grid} of $G$.
\end{definition}

\begin{lemma}
\label{lem:k-distance-unit}
For an $m\times n$ grid $G=(V,E)$, $|N^k(v)|\le 2k^2+2k+1$.
\end{lemma}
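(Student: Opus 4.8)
The plan is to reduce the claim to a counting problem on the integer lattice. First I would observe that, because moves are restricted to grid edges, the shortest-path distance in $G$ between $v_{i,j}$ and $v_{i',j'}$ equals the Manhattan distance $|i-i'|+|j-j'|$ (this is exactly the observation already made in the first remark following Algorithm \ref{alg:dist-domin}). Consequently $N^k(v_{i,j})$ consists precisely of those grid vertices $v_{i',j'}\neq v_{i,j}$ with $|i-i'|+|j-j'|\le k$, i.e.\ the grid vertices lying in the closed $\ell_1$-ball of radius $k$ centred at $v_{i,j}$, with the centre itself removed.

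The key step is to count the lattice points at each fixed Manhattan distance. For $1\le d\le k$, the points $(i',j')$ of the infinite lattice satisfying $|i-i'|+|j-j'|=d$ form a diamond-shaped layer, and a short case count (the four points on the axes through the centre, together with $4(d-1)$ off-axis points) shows there are exactly $4d$ of them. Adding the single centre at distance $0$, the total number of lattice points within distance $k$ of $v_{i,j}$ is
\[
1+\sum_{d=1}^{k}4d \;=\; 1+4\cdot\frac{k(k+1)}{2}\;=\;2k^2+2k+1 .
\]

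Finally I would close the argument by noting that $N^k(v)$ is contained in this closed $\ell_1$-ball, and that replacing the infinite lattice by a finite $m\times n$ grid can only delete points that fall outside the grid, never create new ones. Hence $|N^k(v)|\le 2k^2+2k+1$, as claimed. There is no genuine obstacle in this lemma; the only points to be careful about are verifying the per-layer count $4d$ and observing that the stated bound is in fact the size of the \emph{closed} ball, so it holds with a point to spare (the centre $v$ is excluded from $N^k(v)$) and remains valid after the boundary truncation of a finite grid.
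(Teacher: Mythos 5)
Your proof is correct and takes essentially the same route as the paper: both arguments identify $N^k(v)$ with the lattice points of the radius-$k$ $\ell_1$-ball (the ``diamond'') around $v$ and arrive at the count $2k^2+2k+1$. The only difference is in execution---the paper bounds the count by the area-style quantity $\left\lceil (2k+1)^2/2 \right\rceil$, whereas you sum the shells of size $4d$ exactly; your version is the more rigorous rendering, and it also correctly exposes the harmless off-by-one slack (since $N^k(v)$ excludes the centre, its size is really at most $2k^2+2k$, while $2k^2+2k+1$ is the closed-ball count, which is precisely the quantity the paper later uses as $N^k_{\max}$).
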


\begin{proof}
Since $G$ is a grid, the $k$-neighbours of $v$ form a diamond around it with a diameter of $2k+1$ (see the red regions in Figure \ref{fig:k-distance-elem}). Thus $|N^k(v)|$ is upper-bounded by the area of this region, which is $\left\lceil \frac{(2k+1)^2}{2}\right\rceil=2k^2+2k+1$.
\end{proof}

In what follows we define $N^k_{\max}=2k^2+2k+1$.

\begin{definition}($k$-Diagonal Pattern)
\label{def:k-diag-patt}
A set of vertices $U\subset V$ constitutes a \emph{$k$-diagonal pattern} on grid $G=(V,E)$ if there exists a fixed $0\le r< N^k_{\max}, r\in \mathbb{Z}_+$ such that for any vertex $v_{x,y}\in U$ we have $ky-(k+1)x\equiv r \pmod{N^k_{\max}}$ (see Figure~\ref{fig:k-distance-elem}).
\end{definition}

\begin{definition}($k$-Diagonalization)
\label{def:k-diagonalization}
A set of vertices $U\subset V$ \emph{$k$-diagonalizes} grid $G=(V,E)$ if it constitutes a $k$-diagonal pattern and there exists no vertex $v\in V\backslash U$ that can be added to $U$ so that $U$ remains a $k$-diagonal pattern.
\end{definition}

Moreover, for a grid $G=(V,E)$ and its $k$-super-grid $G'=(V',E')$, the \emph{$k$-projection} is defined as a special mapping from the vertices in $V'\backslash V$ to their $k$-neighbours in $V$. It is defined formally as follows.

\begin{definition}($k$-Projection)
\label{def:k-porjection}
Consider a grid $G=(V,E)$ and its $k$-super-grid $G'=(V',E')$. The \emph{$k$-projection} for a set $U'\subseteq V'$ is defined as
the set $U''=\{u\in V|~\exists v\in U'\backslash V~s.t.~u=\mathrm{friend}^k(v,V) \}\cup \{U'\cap V\}$ (see Figure \ref{fig:k-grid}).
\end{definition}

\begin{lemma}
\label{lem:k-min-2k+1}
Let $U$ be a set of vertices that $k$-diagonalizes a grid $G=(V,E)$. For any two vertices $v_{x,y},v_{x',y'}\in U$ we have $d(v_{x,y},v_{x',y'})\ge 2k+1$.
\end{lemma}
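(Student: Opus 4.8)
The plan is to use the defining congruence of the $k$-diagonal pattern to control the coordinate difference of the two vertices, and then to show that any nonzero lattice vector satisfying this congruence must have $\ell_1$-norm at least $2k+1$. First I would set $\Delta x = x - x'$ and $\Delta y = y - y'$. Since $U$ $k$-diagonalizes $G$, it is in particular a $k$-diagonal pattern, so there is a fixed $r$ with $ky - (k+1)x \equiv r$ and $ky' - (k+1)x' \equiv r \pmod{N^k_{\max}}$. Subtracting gives $k\Delta y - (k+1)\Delta x \equiv 0 \pmod{N^k_{\max}}$, i.e. $k\Delta y - (k+1)\Delta x = t\, N^k_{\max}$ for some integer $t$. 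Because $G$ is a grid, shortest-path distance equals Manhattan distance (Remark 1), so $d(v_{x,y},v_{x',y'}) = |\Delta x| + |\Delta y|$, and the goal reduces to: if $(\Delta x,\Delta y)\neq(0,0)$ then $|\Delta x|+|\Delta y|\ge 2k+1$.

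The main step is a contradiction argument. Suppose $|\Delta x| + |\Delta y| \le 2k$. Bounding the linear form, $|t|\,N^k_{\max} = |k\Delta y - (k+1)\Delta x| \le (k+1)(|\Delta x| + |\Delta y|) \le 2k(k+1) = 2k^2 + 2k < 2k^2 + 2k + 1 = N^k_{\max}$, which forces $t = 0$. Then $k\Delta y = (k+1)\Delta x$, and since $\gcd(k, k+1) = 1$ we must have $\Delta x = ks$ and $\Delta y = (k+1)s$ for some integer $s$, whence $|\Delta x| + |\Delta y| = |s|(2k+1)$. Combined with $|\Delta x| + |\Delta y| \le 2k$ this yields $|s| = 0$, so $(\Delta x, \Delta y) = (0,0)$, contradicting that the two vertices are distinct. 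Hence $|\Delta x| + |\Delta y| \ge 2k+1$, which is exactly the claim.

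I expect the crux to be the tightness of the estimate $2k(k+1) < N^k_{\max} = 2k(k+1)+1$: the inequality is off by precisely one, so the whole argument works only because $N^k_{\max}$ is chosen to be the area of the Lee ball of radius $k$ (cf.\ Lemma \ref{lem:k-distance-unit}). Geometrically, the lemma says that the radius-$k$ diamonds centered at the pattern vertices pack without overlap; in fact they tile the lattice perfectly, since the index of the sublattice defined by the congruence equals $N^k_{\max}$, and the vectors $(k, k+1)$ and $(k+1, -k)$ both attain $\ell_1$-norm $2k+1$, so the bound cannot be improved. The only routine points to verify are that the coprimality step handles every sign of $s$ and that the identity $d = |\Delta x| + |\Delta y|$ survives near the grid boundary, which it does because shortest paths in a grid can always be taken monotone in each coordinate.
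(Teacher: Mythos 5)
Your proof is correct, and while it opens the same way as the paper's --- subtract the two congruences defining the $k$-diagonal pattern and use the fact that grid distance is Manhattan distance --- its key step is genuinely different. The paper parametrizes the pattern with explicit integer quotients, writes $\Delta_2=\frac{k+1}{k}\Delta_1+\frac{N^k_{\max}}{k}\Delta_q$, argues informally that $|\Delta_1|+|\Delta_2|$ is minimized when $\Delta_2=0$ (``as $\Delta_1$ grows, $\Delta_2$ grows faster''), obtains the continuous minimum $N^k_{\max}/(k+1)$, and rounds up by integrality of the distance to get $\left\lceil\frac{2k^2+2k+1}{k+1}\right\rceil=2k+1$. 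You instead run a purely discrete contradiction: assuming $|\Delta x|+|\Delta y|\le 2k$, the estimate $|k\Delta y-(k+1)\Delta x|\le (k+1)\bigl(|\Delta x|+|\Delta y|\bigr)\le 2k(k+1)<N^k_{\max}$ forces the congruence to hold with $t=0$, and then coprimality of $k$ and $k+1$ forces $(\Delta x,\Delta y)$ to be an integer multiple of $(k,k+1)$, whose $\ell_1$-norm is a multiple of $2k+1$, a contradiction. What your route buys is rigor and economy: the paper's monotonicity step is a real-valued, piecewise-linear minimization left unverified, whereas each of your steps is an integer inequality or a divisibility fact, and you avoid dividing by $k$ altogether; you also exhibit the extremal difference vectors $(k,k+1)$ and $(k+1,-k)$, showing the bound $2k+1$ is attained, and correctly isolate the crux --- the one-unit margin $2k(k+1)<N^k_{\max}=2k(k+1)+1$ --- which is exactly why the radius-$k$ diamonds of Lemma \ref{lem:k-distance-unit} pack perfectly (cf.\ Lemma \ref{lem:k-distane-k-super}). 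What the paper's route buys, once made rigorous, is the explicit continuous lower bound $N^k_{\max}/(k+1)$ on the spacing, of which the integer bound is the rounding; either way the conclusion is the same.
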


\begin{proof}
Since $v_{x,y},v_{x',y'}\in U$, we have $y=\frac{1}{k}((k+1)x+r+qN^k_{\max})$ and $y'=\frac{1}{k}((k+1)x'+r+q'N^k_{\max})$, where $r,q\in \mathbb{Z}$ and $0\le r< N^k_{\max}$. We define $\Delta_q=q'-q$, $\Delta_1=x'-x$ and $\Delta_2=y'-y=\frac{k+1}{k}\Delta_1+\frac{N^k_{\max}}{k}\Delta_q$. The shortest distance between $v_{x,y},v_{x',y'}$ is equal to $|\Delta_1|+|\Delta_2|$. From $\Delta_2=\frac{k+1}{k}\Delta_1+\frac{N^k_{\max}}{k}\Delta_q$ it can be observed that as $\Delta_1$ grows, $\Delta_2$ grows faster compared to $\Delta_1$. Hence $|\Delta_1|+|\Delta_2|$ is minimum when $\Delta_2=0$ and $|\Delta_1|=|\frac{N^k_{\max}}{k+1}\Delta_q|$. Note that the minimum (non-zero) distance occurs for $\Delta_q=1$ and also it is an integer, hence it is lower-bounded by $\left\lceil{\frac{2k^2+2k+1}{k+1}}\right\rceil=2k+1$.
\end{proof}

\begin{lemma}
\label{lem:k-distane-k-super}
Consider a grid $G=(V,E)$ and its $k$-super-grid $G'=(V',E')$. If $U'\subset V'$ $k$-diagonalizes $G'$, then each vertex in $V$ is $k$-dominated by exactly one vertex from $U'$.
\end{lemma}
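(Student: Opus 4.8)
The plan is to recast $k$-domination as membership in a Manhattan ball and then to show that, along the $k$-diagonal lattice, these balls tile the grid $G'$ perfectly. Write $\phi(x,y)=ky-(k+1)x \bmod N^k_{\max}$ for the linear form appearing in Definition \ref{def:k-diag-patt}, and for $v\in V$ let $B^k(v)=\{u\in V': d(u,v)\le k\}$ be the closed ball of radius $k$ about $v$. A vertex $u\in U'$ $k$-dominates $v$ exactly when $u\in B^k(v)$, so the lemma is equivalent to the assertion that $B^k(v)$ contains exactly one vertex of $U'$.

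I would first pin down $U'$ and $B^k(v)$. Since $U'$ $k$-diagonalizes $G'$, maximality in Definition \ref{def:k-diagonalization} forces $U'$ to be the whole set of vertices of $V'$ with $\phi\equiv r$; hence $B^k(v)\cap U'=\{u\in B^k(v):\phi(u)\equiv r\}$. For the ball itself, Definition \ref{def:k-super} gives $k+1\le i\le m'-k$ and $k+1\le j\le n'-k$ for $v=v_{i,j}$, so any point within distance $k$ of $v$ has both coordinates inside $[1,m']\times[1,n']$ and therefore lies in $V'$. Thus $B^k(v)$ is an uncut diamond and the area count in the proof of Lemma \ref{lem:k-distance-unit} gives $|B^k(v)|=N^k_{\max}$. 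This is precisely where the passage to the $k$-super-grid does its work: it simultaneously guarantees that a dominating vertex is available inside $V'$ and that the ball is not truncated by the boundary.

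The core step is to prove that $\phi$ is injective on $B^k(v)$. If distinct $p_1,p_2\in B^k(v)$ satisfied $\phi(p_1)=\phi(p_2)=r'$, then $p_1$ and $p_2$ would lie in one common $k$-diagonal pattern, namely the one with residue $r'$; the congruence computation in the proof of Lemma \ref{lem:k-min-2k+1} uses only this shared residue and the value $N^k_{\max}=2k^2+2k+1$, so it applies verbatim and yields $d(p_1,p_2)\ge 2k+1$. But the triangle inequality gives $d(p_1,p_2)\le d(p_1,v)+d(v,p_2)\le 2k$, a contradiction. Hence $\phi$ is injective on $B^k(v)$.

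To finish I would count. The injective map $\phi$ sends the $N^k_{\max}$ vertices of $B^k(v)$ into $\mathbb{Z}/N^k_{\max}\mathbb{Z}$, a set of the same size, so $\phi$ restricts to a bijection of $B^k(v)$ onto $\mathbb{Z}/N^k_{\max}\mathbb{Z}$. In particular the residue $r$ is attained exactly once, so there is a unique $u\in B^k(v)$ with $\phi(u)\equiv r$; by the first observation this $u$ is the one and only vertex of $U'$ in $B^k(v)$, and it $k$-dominates $v$. I expect the surjectivity half of this bijection to be the main obstacle: uniqueness is essentially Lemma \ref{lem:k-min-2k+1}, but existence does not follow from the congruence alone and instead rests on the exact equality $|B^k(v)|=N^k_{\max}$, which is why both the $k$-sub-grid hypothesis (the ball stays uncut) and the precise diamond count $2k^2+2k+1$ are indispensable.
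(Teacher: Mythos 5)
Your proof is correct and follows essentially the same route as the paper's: both consider the radius-$k$ ball around $v$ (which the $k$-sub-grid condition keeps uncut, so it has exactly $N^k_{\max}$ vertices), invoke Lemma \ref{lem:k-min-2k+1} together with the triangle inequality to show that all residues within the ball are distinct, and conclude by a pigeonhole count that the residue $r$ occurs exactly once. Your write-up merely makes explicit two points the paper leaves implicit --- that maximality forces $U'$ to be the full residue class in $V'$, and that the ball count is exact rather than an upper bound --- but the underlying argument is the same.
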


\begin{proof}
For each vertex $v_{x,y}\in V$ let $r_{v_{x,y}}\equiv ky-(k+1)x\pmod{N^k_{\max}}$. Consider any vertex $v\in V$ and its $k$-neighbourhood $N^k(v)$. The distance between any two vertices in $J=\{v\}\cup N^k(v)$ is at most $2k$. Also, there are exactly $N^k_{\max}$ vertices in this set. Thus, for any two distinct vertices $u,w\in J$ we have $r_u\neq r_w$ by Lemma \ref{lem:k-min-2k+1}. Hence each vertex $u\in N^k(v)$ has a distinct value of $r_u$. Consequently, for the value of $r$ that corresponds to the diagonalization $U'$, there is exactly one vertex in the $k$-neighbourhood of $v$ such that $r_v=r$ and thus $v$ is $k$-dominated by exactly one vertex from $U'$.
\end{proof}

\begin{lemma}
\label{lem:k-diag-cardinal}
If a set of vertices $U\subset V$ $k$-diagonalizes an $m\times n$ grid $G=(V,E)$, then $U$ contains at most $\left\lceil \frac{mn}{ N^k_{\max}}+\frac{N^k_{\max}}{4}\right\rceil$
vertices.
\end{lemma}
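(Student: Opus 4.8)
The plan is to count the vertices of $U$ column by column and to exploit the arithmetic structure of the congruence defining a $k$-diagonal pattern. Write $M := N^k_{\max} = 2k^2+2k+1$. First I would record that $\gcd(k,M)=1$: since $M = k(2k+2)+1$ we have $M \equiv 1 \pmod{k}$. Consequently, for each fixed column index $x$ the defining relation $ky-(k+1)x \equiv r \pmod M$ pins $y$ down to a single residue class modulo $M$, namely $y \equiv k^{-1}\big((k+1)x+r\big) \pmod M$. Writing $c_x$ for the number of vertices of $U$ in column $x$ and $n = Mq+s$ with $0\le s < M$, each $c_x$ equals either $q$ or $q+1$, and $|U| = \sum_{x=1}^m c_x$ because a $k$-diagonalization contains \emph{every} grid vertex satisfying the congruence.

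The key structural step is to show that over any $M$ consecutive columns these counts sum to exactly $n$. As $x$ increases by one, the residue class of $y$ shifts by the fixed amount $\delta \equiv k^{-1}(k+1) \pmod M$. Since $k^{-1}$ is a unit, $\gcd(\delta,M)=\gcd(k+1,M)$, and from $M = 2k(k+1)+1$ we get $M\equiv 1 \pmod{k+1}$, so $\gcd(k+1,M)=1$. Thus $\delta$ generates $\mathbb{Z}_M$, and over any $M$ consecutive columns the residue classes run through all of $\mathbb{Z}_M$ exactly once; their counts therefore sum to $\sum_{a=0}^{M-1}\bigl|\{\,y : 1\le y\le n,\ y\equiv a \pmod M\,\}\bigr| = n$.

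With this in hand I would decompose $m = Ma + b$ with $0\le b<M$, so that the first $Ma$ columns contribute exactly $an$ and only the final $b$ columns, whose residues are distinct, contribute a remainder $R$. Since at most $s$ of the residue classes have count $q+1$ and the rest have count $q$, we obtain $R \le bq + \min(b,s)$, whence $|U| - \tfrac{mn}{M} = R - \tfrac{bn}{M} \le \min(b,s) - \tfrac{bs}{M}$. The last step is to maximize the right-hand side over $0\le b,s<M$; a short split into the cases $b\le s$ and $b\ge s$ shows the maximum occurs at $b=s$, with value $s(M-s)/M$, which is at most $M/4$. As $|U|$ is an integer, this yields $|U| \le \bigl\lceil \tfrac{mn}{M} + \tfrac{M}{4}\bigr\rceil$, as claimed.

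I expect the main obstacle to be the structural claim that the per-column counts sum to $n$ over every window of $M$ columns; this is exactly where the coprimality $\gcd(k+1,M)=1$ is indispensable, for without it the residues would not exhaust $\mathbb{Z}_M$ and the clean telescoping of the first $Ma$ columns into $an$ plus a small remainder would break down. The concluding optimization is routine, but it is precisely what pins the additive error to $M/4 = N^k_{\max}/4$.
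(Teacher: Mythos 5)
Your proof is correct, and it reaches the bound by a route that differs from the paper's in its key lemma and decomposition, though the two converge at the end. The paper does not sweep column by column; instead it asserts, directly from maximality, that any $N^k_{\max}$ consecutive vertices in a \emph{row or column} contain exactly one vertex of $U$, and then makes a two-dimensional partition of the grid into four blocks: writing $m=qN^k_{\max}+a$ and $n=pN^k_{\max}+b$, a bulk $qN^k_{\max}\times pN^k_{\max}$ block (at most $pqN^k_{\max}$ vertices of $U$), two boundary strips (at most $pa$ and $qb$), and an $a\times b$ corner (at most $\min(a,b)$, since each of its rows and columns holds at most one vertex). This yields exactly the intermediate inequality you derive, $|U|\le \frac{mn}{N^k_{\max}}-\frac{ab}{N^k_{\max}}+\min(a,b)$, followed by the identical optimization: the maximum $\frac{N^k_{\max}}{4}$ occurs at $a=b=\frac{N^k_{\max}}{2}$. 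What your argument buys is rigor at precisely the point the paper elides: its ``exactly one per $N^k_{\max}$ consecutive vertices'' claim silently requires that both $k$ and $k+1$ be invertible modulo $N^k_{\max}$, which is the coprimality you establish from $N^k_{\max}=k(2k+2)+1=2k(k+1)+1$; moreover, your complete-residue-system lemma makes the bulk contribution \emph{exact} (every $N^k_{\max}$ consecutive columns contain exactly $n$ vertices of $U$) rather than merely upper-bounded. What the paper's block partition buys is brevity and symmetry: it treats rows and columns on an equal footing and never needs to track how the residue class of $y$ shifts from one column to the next.
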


\begin{proof}
Since $U$ $k$-diagonalizes $G$, it constitutes a $k$-diagonal pattern on $G$ such that no more vertices can be added to it while maintaining a $k$-diagonal pattern. Therefore, among each $N^k_{\max}$ consecutive vertices in any row or column there is exactly one vertex from $U$. Hence, the number of vertices of $U$ in a row/column of $t$ vertices is at most $\left\lceil\frac{t}{N^k_{\max}}\right\rceil$.

Thus, there are at most $N^k_{\max}$ vertices from $U$ in any $N^k_{\max}\times N^k_{\max}$ grid. Hence, in any $N^k_{\max}q\times N^k_{\max}p$ grid with $p,q\in \mathbb{Z_+}$, there are at most $N^k_{\max}pq$ vertices from $U$. For an $m\times n$ grid $G=(V,E)$ with $m=qN^k_{\max}+a$, $n=pN^k_{\max}+b$ and $0\le a,b< N^k_{\max}$, we partition $V$ into the four following sets: 
\[
V_1=\{v_{i,j}|~1\le i\le qN^k_{\max}, 1\le j \le pN^k_{\max} \},
\]
\[ 
V_2=\{v_{i,j}|~qN^k_{\max}+1\le i\le m, 1\le j \le pN^k_{\max} \},
\]
\[ 
V_3=\{v_{i,j}|~1\le i\le qN^k_{\max}, pN^k_{\max}+1\le j \le n \},
\] 
and
\[ 
V_4=\{v_{i,j}|~qN^k_{\max}+1\le i\le m, pN^k_{\max}+1\le j \le n\}.
\] 
As stated, $|V_1\cap U|\le pqN^k_{\max}$. Grid $V_2$ has $a$ columns each having $N^k_{\max}p$ vertices, hence $|V_2\cap U|\le pa$. Similarly, we have $|V_3\cap U|\le qb$. In summary, we so far have $|(V_1\cup V_2\cup V_3)\cap U|\le pqN^k_{\max}+qb+pa$. Note that $pqN^k_{\max}+qb+pa=\frac{mn}{N^k_{\max}}-\frac{ab}{N^k_{\max}}$.

It remains to upper-bound $|V_4\cap U|$. Without loss of generality assume that $a\le b$. Since the number of rows and columns in $V_4$ are less than $N^k_{\max}$, in each row/column at most one dominating vertex can exist. Since $a\le b$, then $|V_4\cap U|\le a$. Therefore $|V\cap U|=|U|\le \frac{mn}{N^k_{\max}}-\frac{ab}{N^k_{\max}}+a$. Maximum of $-\frac{ab}{N^k_{\max}}+a$ takes place when $b$ has its minimum value, i.e., $b=a$. Moreover, for $-\frac{a^2}{N^k_{\max}}+a$ we have that the maximum is $\frac{N^k_{\max}}{4}$ and it happens when $a=\frac{N^k_{\max}}{2}$. This results in $|U|\le \left\lceil \frac{mn}{ N^k_{\max}}+\frac{N^k_{\max}}{4}\right\rceil$.

\end{proof}

\begin{theorem}
\label{theo:upper-bound-k-dist}
For an $m\times n$ grid $G=(V,E)$, a $k$-distance dominating set $S\subset V$ can be constructed using $k$-diagonalization and $k$-projection in polynomial-time such that $|S|\le \left\lceil \frac{(m+2k)(n+2k)}{N^k_{\max}}+\frac{N^k_{\max}}{4}\right\rceil$.
\end{theorem}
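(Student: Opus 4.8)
The plan is to mirror the $k=1$ construction behind Theorem~\ref{theo:upper-bound}, now using the $k$-diagonalization, $k$-projection, and $k$-super-grid machinery. First I would let $G'=(V',E')$ be the $k$-super-grid of $G$; by Definition~\ref{def:k-super} this is an $(m+2k)\times(n+2k)$ grid, and $G$ is its $k$-sub-grid. I would fix any $k$-diagonalization $U'\subset V'$ of $G'$ (one exists: start from a single vertex and greedily add every vertex meeting the congruence $ky-(k+1)x\equiv r\pmod{N^k_{\max}}$ until maximality, which is exactly Definition~\ref{def:k-diagonalization}). I then set $S=U''$ to be the $k$-projection of $U'$ from Definition~\ref{def:k-porjection}, so that $S\subseteq V$ by construction.

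Next I would argue $S$ is a $k$-distance dominating set of $G$. Fix $w\in V$ with $w\notin S$; since $U'\cap V\subseteq S$ we have $w\notin U'$. By Lemma~\ref{lem:k-distane-k-super}, $w$ is $k$-dominated by exactly one vertex $v\in U'$, i.e. $d(v,w)\le k$ with $v\neq w$. If $v\in V$, then $v\in U'\cap V\subseteq S$ and $v$ $k$-dominates $w$, so we are done. If $v\in V'\backslash V$, then since $w\in N^k(v)\cap V$ the friend $u=\mathrm{friend}^k(v,V)$ exists and lies in $S$; it remains to check that $u$ still $k$-dominates $w$.

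Here is the key step, which I expect to be the main obstacle: the projection must not lose coverage. Writing $V=[1,m]\times[1,n]$ in the grid's own coordinates, the nearest vertex of $V$ to any $v=(a,b)$ is obtained by clamping each coordinate into $[1,m]$ and $[1,n]$ respectively, and this clamped vertex is exactly $\mathrm{friend}^k(v,V)$, since the $L_1$ nearest point of an axis-aligned box is unique and decomposes coordinate-wise. Clamping a coordinate into an interval cannot increase its distance to any point already inside that interval, so $d(u,w)\le d(v,w)\le k$ for every $w\in V$; in particular $u$ $k$-dominates $w$. This monotonicity is what makes $k$-projection safe, and everything else reduces to the cited lemmas. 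The remaining corner case, where $v\in V'\backslash V$ is too far from $V$ to have any friend, causes no trouble: such $v$ satisfies $N^k(v)\cap V=\emptyset$, so it is responsible for no vertex of $V$ and is simply dropped by Definition~\ref{def:k-porjection}.

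Finally I would establish the cardinality bound. Since $U'\cap V\subseteq S$ and each $v\in U'\backslash V$ contributes at most its single friend (and possibly none), we have $|S|\le|U'\cap V|+|U'\backslash V|=|U'|$. Applying Lemma~\ref{lem:k-diag-cardinal} to the $(m+2k)\times(n+2k)$ grid $G'$ gives $|U'|\le\left\lceil\frac{(m+2k)(n+2k)}{N^k_{\max}}+\frac{N^k_{\max}}{4}\right\rceil$, and the claimed bound on $|S|$ follows. Polynomial running time is immediate: the $k$-diagonalization is read off from the congruence test at each of the $O(mn)$ vertices, and computing each friend is an $O(1)$ coordinate clamp, so the whole construction runs in time polynomial in $m$ and $n$.
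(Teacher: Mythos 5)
Your proposal is correct and follows exactly the route the paper intends: $k$-diagonalize the $(m+2k)\times(n+2k)$ $k$-super-grid, $k$-project onto $G$, invoke Lemma~\ref{lem:k-distane-k-super} for coverage and Lemma~\ref{lem:k-diag-cardinal} for the cardinality bound. The paper's own proof is only a one-sentence citation of those lemmas plus an appeal to the $k=1$ argument of Theorem~\ref{theo:upper-bound}, so your explicit verification that $\mathrm{friend}^k(v,V)$ is the coordinate-wise clamp and that clamping cannot increase $L_1$ distance to any vertex of $V$ is a welcome filling-in of the step the paper leaves implicit, not a departure from its approach.
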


\begin{proof}
The proof follows from Lemmas \ref{lem:k-distance-unit}, \ref{lem:k-distane-k-super} and \ref{lem:k-diag-cardinal} and by replacing the diagonalization and projection operations with the $k$-diagonalization and $k$-projection operations in the proof of Theorem \ref{theo:upper-bound}~\cite{TYC'92}.
\end{proof}

\begin{lemma}
\label{lem:lower-bound-k-dist}
If $S\subset V$ is a $k$-distance dominating set for an $m\times n$ grid $G=(V,E)$, $|S|\ge\left\lceil \frac{mn}{N^k_{\max}}\right\rceil$
\end{lemma}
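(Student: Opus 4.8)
The plan is to establish the lower bound by a straightforward counting (volume) argument. Each vertex in a $k$-distance dominating set $S$ can $k$-dominate only the vertices within its $k$-neighbourhood together with itself, and by Lemma \ref{lem:k-distance-unit} this set has size at most $N^k_{\max}=2k^2+2k+1$. Since every vertex of $G$ must be either in $S$ or $k$-distance dominated by some vertex of $S$, the union of these neighbourhoods (including the dominating vertices themselves) must cover all of $V$.

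First I would make the covering precise. For each $u\in S$, the set of vertices it can account for is $\{u\}\cup N^k(u)$, which has cardinality at most $N^k_{\max}$. Because $S$ is a $k$-distance dominating set, we have $V=\bigcup_{u\in S}\big(\{u\}\cup N^k(u)\big)$. Taking cardinalities and applying subadditivity of the union gives
\[
mn=|V|\le \sum_{u\in S}\big|\{u\}\cup N^k(u)\big|\le |S|\,N^k_{\max}.
\]
Rearranging yields $|S|\ge \frac{mn}{N^k_{\max}}$, and since $|S|$ is an integer we may take the ceiling to obtain $|S|\ge\left\lceil\frac{mn}{N^k_{\max}}\right\rceil$, which is exactly the claimed bound.

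The only subtlety, and the one place I would be careful, is ensuring that $\big|\{u\}\cup N^k(u)\big|\le N^k_{\max}$ rather than $N^k_{\max}+1$. Here I would invoke the precise statement of Lemma \ref{lem:k-distance-unit}: the diamond region of radius $k$ around $u$ \emph{already includes} $u$ itself, so $|N^k(u)|\le 2k^2+2k+1$ is the count of the entire ball, or else one must check the bookkeeping of whether $N^k$ excludes the center. Since $N^k(v)$ is defined by $0<d(u,v)\le k$ (excluding $v$ itself), the set $\{v\}\cup N^k(v)$ is the closed ball, whose size is at most $N^k_{\max}$ as computed in the proof of Lemma \ref{lem:k-distance-unit} (the area estimate $\lceil(2k+1)^2/2\rceil$ counts the center). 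I would state this explicitly to keep the constant tight. Everything else is routine, so I do not anticipate a genuine obstacle; the argument is a clean packing/covering bound and does not even require the $k$-diagonalization machinery used for the matching upper bound in Theorem \ref{theo:upper-bound-k-dist}.
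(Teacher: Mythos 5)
Your proof is correct and follows essentially the same counting argument as the paper: each vertex of $S$ accounts for at most $N^k_{\max}$ vertices by Lemma \ref{lem:k-distance-unit}, so $mn\le |S|\,N^k_{\max}$, and integrality of $|S|$ gives the ceiling. If anything, your bookkeeping is slightly more careful than the paper's: you resolve the open-versus-closed-neighbourhood issue explicitly (so that $\bigl|\{u\}\cup N^k(u)\bigr|\le N^k_{\max}$ rather than $N^k_{\max}+1$), and you obtain the ceiling from integrality, whereas the paper justifies it with a less precise remark about boundary vertices.
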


\begin{proof}
According to Lemma \ref{lem:k-distance-unit}, a vertex $v\in V$ $k$-dominates at most $N^k_{\max}$ vertices.
Hence, at least $\left\lceil \frac{mn}{N^k_{\max}}\right\rceil$ dominating vertices are needed to $k$-dominate an $m\times n$ grid. Note that we use $|S|\ge\left\lceil \frac{mn}{N^k_{\max}}\right\rceil$ instead of $|S|\ge\left\lfloor \frac{mn}{N^k_{\max}}\right\rfloor$ since dominating vertices in the $k$-neighbourhood of vertices on the grid boundary do not have all their $k$-neighbours in $V$.
\end{proof}

\begin{corollary}
\label{cor:upper-lower}
Let $S$ be a $k$-distance dominating set for an $m\times n$ grid $G=(V,E)$ obtained by $k$-diagonalization and $k$-projection and let $L$ denote the lower-bound for $S$ from Lemma \ref{lem:lower-bound-k-dist}. For any constant $k\in \mathbb{Z_+}$, the approximation ratio $\frac{|S|}{L}$ satisfies $\lim_{n,m\rightarrow\infty} \frac{|S|}{L}=1$.
\end{corollary}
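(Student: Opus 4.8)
The plan is to establish the result by a direct squeeze (sandwich) argument, using the lower bound of Lemma \ref{lem:lower-bound-k-dist} and the upper bound of Theorem \ref{theo:upper-bound-k-dist} as the two sides of the inequality, and then letting $m,n\to\infty$ with $k$ (hence $N^k_{\max}=2k^2+2k+1$) held fixed as a constant.

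First I would record the lower side. Since $L=\left\lceil mn/N^k_{\max}\right\rceil$ and $S$ is itself a $k$-distance dominating set, Lemma \ref{lem:lower-bound-k-dist} gives $|S|\ge L$, so that $|S|/L\ge 1$ for every $m,n$. This disposes of one half of the squeeze and pins the limit from below at $1$.

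Next I would bound the ratio from above. By Theorem \ref{theo:upper-bound-k-dist}, $|S|\le\left\lceil (m+2k)(n+2k)/N^k_{\max}+N^k_{\max}/4\right\rceil$. Dropping the outer ceiling costs at most $1$, and bounding $L\ge mn/N^k_{\max}$ from below, I would write
\[
\frac{|S|}{L}\le \frac{\frac{(m+2k)(n+2k)}{N^k_{\max}}+\frac{N^k_{\max}}{4}+1}{\frac{mn}{N^k_{\max}}}.
\]
Expanding $(m+2k)(n+2k)=mn+2k(m+n)+4k^2$ and multiplying numerator and denominator by $N^k_{\max}/mn$, the right-hand side becomes
\[
1+\frac{2k(m+n)+4k^2}{mn}+\frac{N^k_{\max}\big(\tfrac{N^k_{\max}}{4}+1\big)}{mn}.
\]

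Finally I would take the limit. With $k$ fixed, $N^k_{\max}$ is a constant, so every term after the leading $1$ is of order $(m+n)/mn$ or $1/mn$ and therefore tends to $0$ as $m,n\to\infty$; thus the upper bound tends to $1$. Combining this with $|S|/L\ge 1$ and invoking the squeeze theorem yields $\lim_{m,n\to\infty}|S|/L=1$. I do not anticipate a genuine obstacle: the only points requiring care are the bookkeeping of the two ceiling functions (overestimating the numerator by $+1$ while underestimating the denominator by discarding its ceiling) and the observation that fixing $k$ is precisely what keeps the constant $N^k_{\max}$ from competing with the $mn$ growth in the denominator.
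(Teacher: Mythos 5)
Your proof is correct and follows essentially the same route as the paper: both arguments squeeze $|S|/L$ using the lower bound of Lemma \ref{lem:lower-bound-k-dist} and the upper bound of Theorem \ref{theo:upper-bound-k-dist}, handle the ceilings via $x\le\lceil x\rceil\le x+1$, and let $m,n\to\infty$ with $N^k_{\max}$ constant. If anything, your treatment of the lower side ($|S|/L\ge 1$, since $L$ lower-bounds every $k$-distance dominating set) is cleaner than the paper's, which instead asserts $\frac{(m+2k)(n+2k)/N^k_{\max}+N^k_{\max}/4}{mn/N^k_{\max}+1}\le\frac{|S|}{L}$ --- an inequality that is really a bound on the ratio of ceilings rather than something that follows for $|S|/L$ itself.
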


\begin{proof}
From Theorem \ref{theo:upper-bound-k-dist} and Lemma
\ref{lem:lower-bound-k-dist}, we have
\[
\frac{|S|}{L}\le\frac{\left\lceil(m+2k)(n+2k)/N^k_{\max}+N^k_{\max}/4\right\rceil}{\left\lceil
    mn/N^k_{\max}\right\rceil}.
\]
Therefore,
\[
\frac{(m+2k)(n+2k)/N^k_{\max}+N^k_{\max}/4}{mn/N^k_{\max}+1}\le\frac{|S|}{L},
\]
and
\[
 \frac{|S|}{L}\le\frac{(m+2k)(n+2k)/N^k_{\max}+N^k_{\max}/4+1}{mn/N^k_{\max}}.
\]
Hence, we have
\[ 
\frac{(m+2k)(n+2k)+(N^k_{\max})^2/4}{mn+N^k_{\max}}\le\frac{|S|}{L},
\]
and 
\[
\frac{|S|}{L}\le\frac{(m+2k)(n+2k)+(N^k_{\max})^2/4+N^k_{\max}}{mn}.
\]
For constant $k$ we have
\[
\lim_{n,m\rightarrow\infty}\frac{(m+2k)(n+2k)+(N^k_{\max})^2/4}{mn+N^k_{\max}}=
\]
\[
\lim_{n,m\rightarrow\infty}\frac{(m+2k)(n+2k)+(N^k_{\max})^2/4+N^k_{\max}}{mn}=1.
\]
Therefore by the Squeeze Theorem $\lim_{n,m\rightarrow\infty} \frac{|S|}{L}=1$. \end{proof}

\begin{figure}[t]
\includegraphics[scale=0.2]{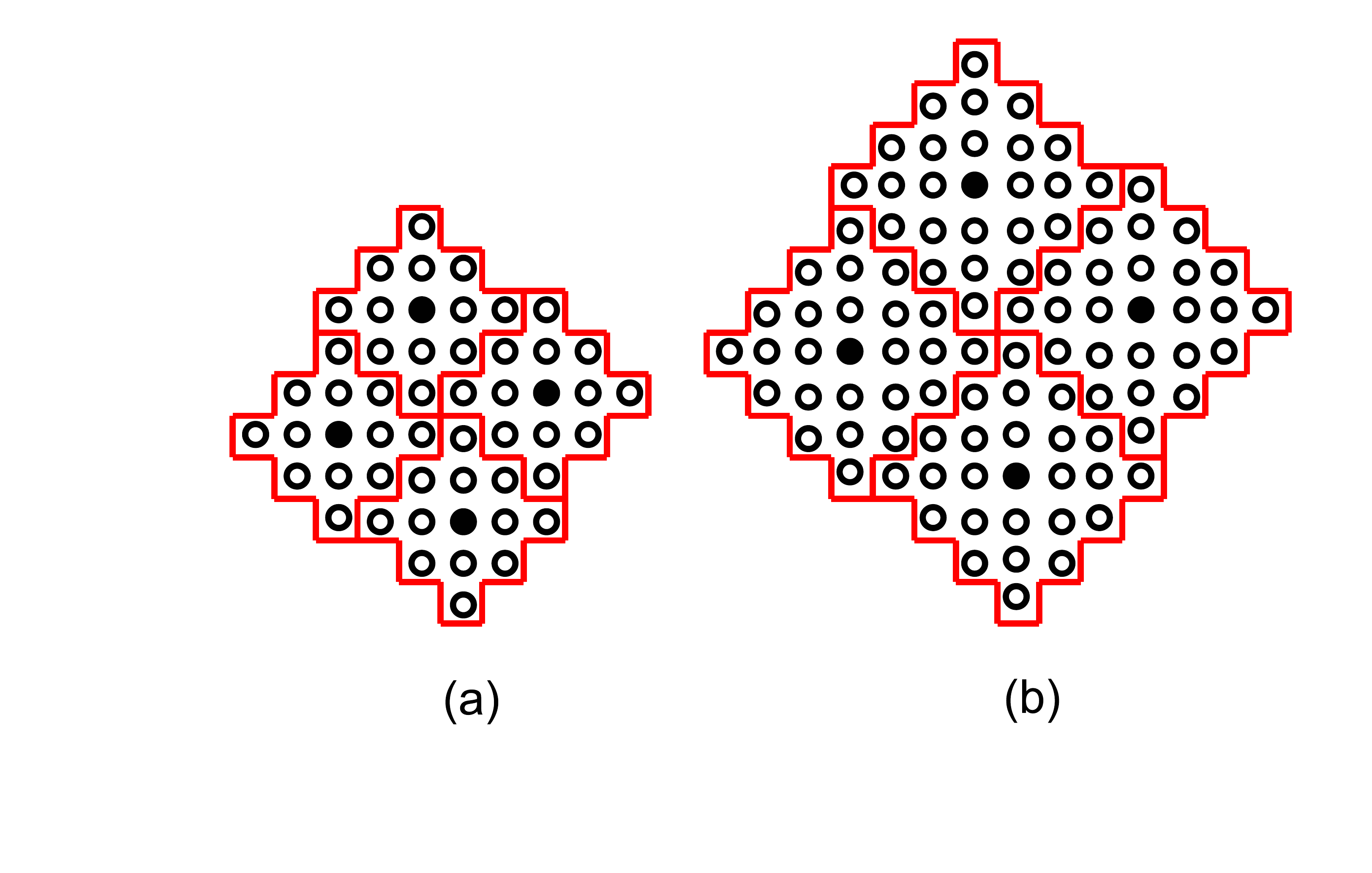}
\vspace{-0.5cm}
\caption{A 2-diagonal pattern and a 3-diagonal pattern are depicted. Observe that the structure is similar to the regular diagonal pattern.}
\label{fig:k-distance-elem}
\end{figure}

For a graph $G$, its \emph{$k$-th power}, denoted by $G^k=(V',E')$, is a graph with the same vertex set as $G$, i.e., $V=V'$, in which two distinct vertices share an edge if and only if their distance in $G$ is at most $k$~\cite{BM'08} (see Figure~\ref{fig:k-power}). Hence, in $G^k$ each vertex is connected to the vertices it $k$-distance dominates in $G$. We finish this section with the following remark that relates the $k$-distance dominating set problem in grids to the regular dominating set problem in their $k$-th power graphs.

\begin{remark}[$k$-th Power of Grids]
\label{rem:k-power}
It might seem that a reasonable approach for $k$-distance domination on a grid $G$ is to simply take the $k$-th power of the graph to obtain $G^k$, and then perform regular domination algorithms on $G^k$. Note that by the definition of $G^k$, a regular dominating set in $G^k$ is equivalent to a $k$-distance dominating set in $G$ and hence $\gamma(G^k)=\gamma^k(G)$. Unfortunately, $G^k$ is no longer a grid (e.g., there are diagonal edges connecting $v_{x,y}$ to $v_{x+1,y+1}$ for $k\ge 2$). In fact, it is not even a planar graph for $m\times n$ grids with $m,n\ge 2$. Therefore, as discussed in Section~\ref{sec:intro}, choosing dominating vertices greedily in $G^k$ might obtain a dominating set with size as large as $(\ln(|V|)+1)\gamma^k(G)$.
\end{remark}

\begin{figure}[t]
\includegraphics[scale=0.4]{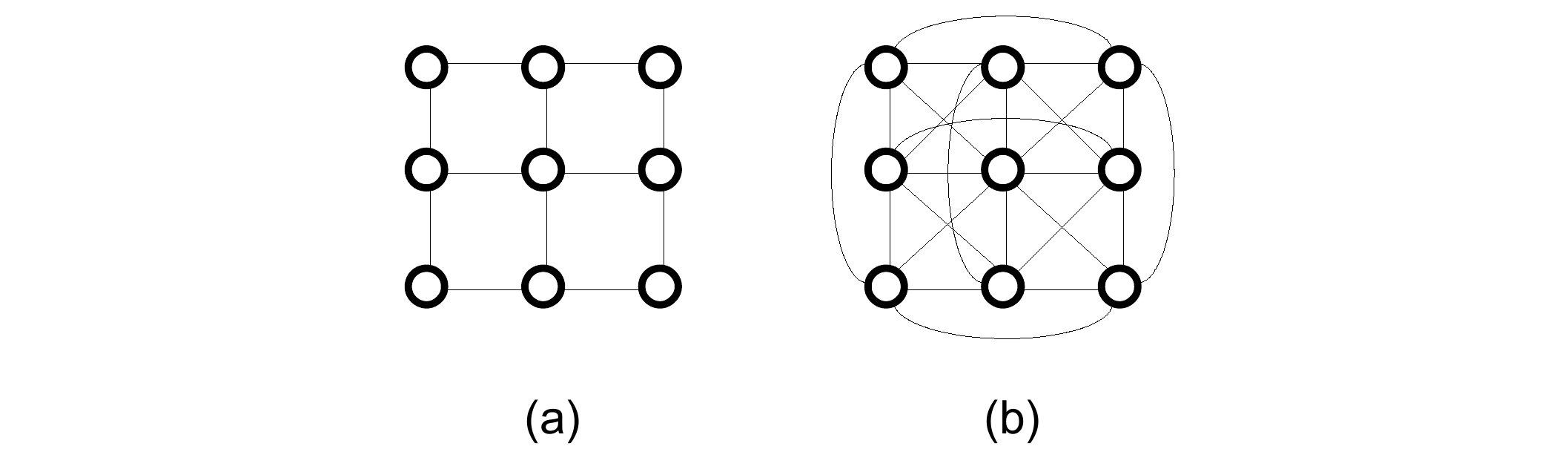}
\caption{Figures (a) and (b) show a $3\times 3$ graph and its second power, respectively. Vertices within distance two are connected to each other in (b).}
\label{fig:k-power}
\end{figure}

\subsection{Distributed $k$-Distance Domination on Grids}
\label{subsection:k-distance-dist}
Using the algorithm explained in Section \ref{subsec:upper-bound-k-distance}, a distributed $k$-distance domination algorithm can be designed for grids. In practice, the $k$-distance dominating set problem corresponds to settings where agents are equipped with longer range sensory equipment and can sense vertices up to distance $k$ from them.
Therefore, the goal is to arrange the agents on the grid vertices in a distributed way such that for each vertex there exists at least one agent with distance at most $k$ from it.

This algorithm is similar to Algorithm \ref{alg:dist-domin} in Section \ref{susec:dist-alg}, except for two modifications. The first modification is that module $m_2$ and module $m_1$ with module centers $c(m_1)=v_{i,j}$ and $c(m_2)=v_{i',j'}$ can now connect to each other if $v_{i',j'}\in \{ v_{i+k,j+k+1},v_{i+k+1,j-k},v_{i-k,j-k-1},v_{i-k-1,j+k}\}$ (see Figure \ref{fig:k-distance-elem}). These constitute the \emph{slots}. The second modification is the definition of \emph{orphans}. If $U'$ is a set of vertices that $k$-diagonalizes the $k$-super-grid of $G$, vertex $v\in V$ is an orphan if it satisfies the two following conditions: (a) $v$ has no $k$-neighbour in $U' \cap V$, and (b) $v$ is in the $k$-neighbourhood of a vertex $u\in U'\backslash V$ with the same $x$ or $y$ coordinates. Hence, \emph{valid slots} are defined for each settled agent as the union of its slots located inside the grid and the orphans of its slots located outside the grid (see Figure \ref{fig:k-grid}).

\begin{figure}[t]
\includegraphics[scale=0.22]{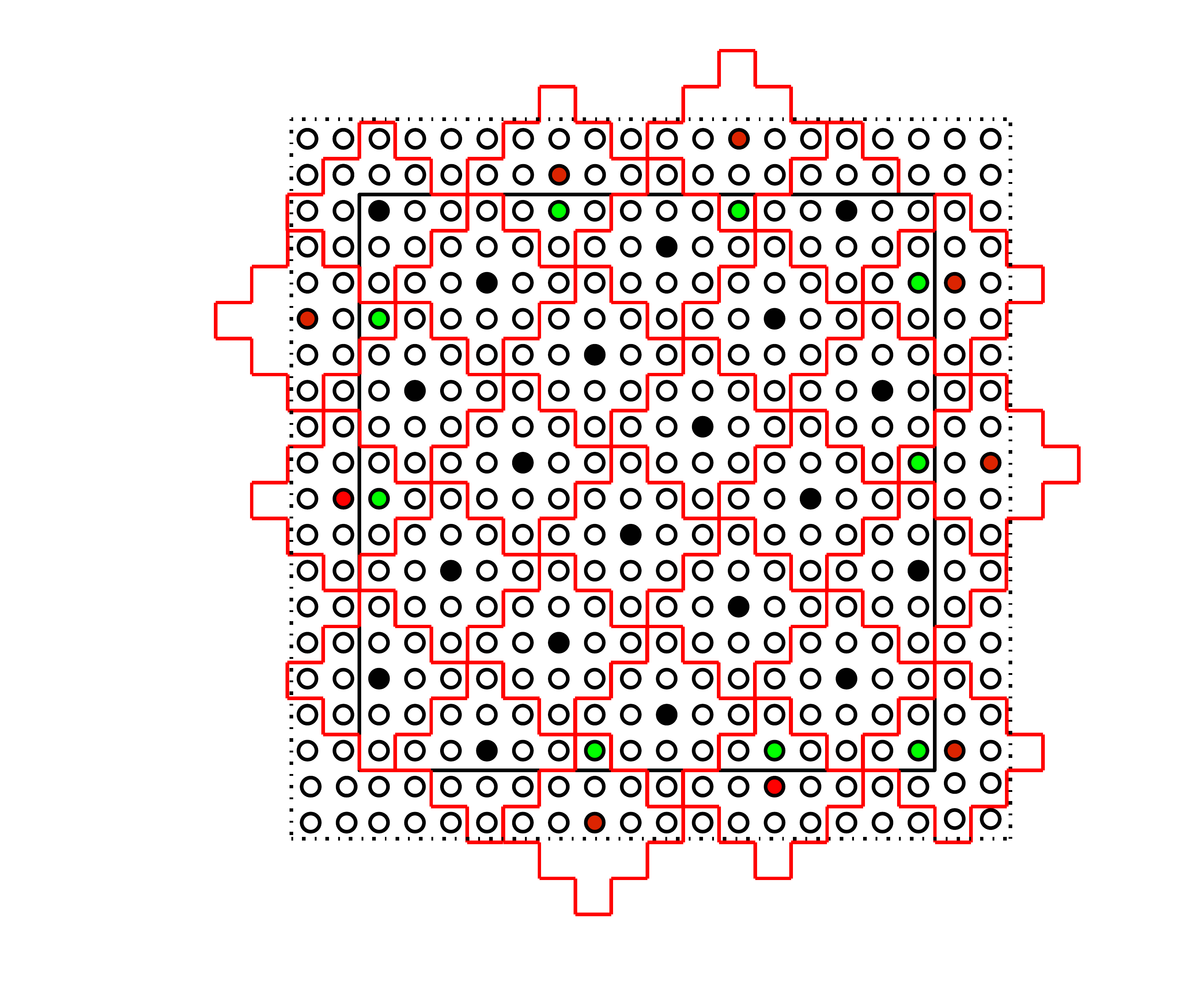}
\vspace{-0.5cm}
\caption{A $16\times 16$ grid $G$ and its 2-super-grid $G'$ are shown by solid and dashed squares, respectively. Both grids are 2-diagonalized. The black circles are the vertices that 2-diagonalize $G$. The union of red and black circles 2-diagonalizes $G'$. The green circles are the 2-projections of the red circles onto $G$. Before projection these vertices are called orphans.}
\label{fig:k-grid}
\end{figure}

\section{Summary and Open problems}
\label{sec:conclusion}
In this paper we studied the dominating set and $k$-distance dominating set problems on $m\times n$ grids. We discussed a construction from~\cite{TYC'92} to obtain dominating sets for grids with near optimal size and generalized it to work in the $k$-distance domination scenario. We used these methods in distributed algorithms and showed that the resulting dominating sets are upper-bounded by $\left\lceil\frac{(m+2k)(n+2k)}{2k^2+2k+1}+\frac{2k^2+2k+1}{4}\right\rceil$. The difference between the acquired upper-bound and the domination number of grid is at most five, for $16\le m\le n$ and $k=1$. However, via a more detailed case-based analysis in the grid corners, our distributed procedure can be used to obtain optimal dominating sets for $16\le m\le n$.

There are many open problems in this area. The $k$-domination number of grids is still unknown. It is also of interest to find centralized and distributed algorithms for dominating sub-graphs of grids, that is, grids with some of their vertices or edges missing. Generalizing these algorithms to the cases where the underlying graphs are cubic or hyper-cubic grids is another direction of this research.

\end{document}